\newtheorem{theorem}{Theorem}[section]
\newtheorem{lemma}[theorem]{Lemma}
\newtheorem{proposition}[theorem]{Proposition}
\theoremstyle{remark}
\newtheorem{remark}[theorem]{Remark}
\newtheorem{example}[theorem]{Example}
\def\R{{\mathbb R}}
\def\M{{\mathbb M}}
\def\diag{\mathop{\textnormal{diag}}}
\def\vect{\mathop{\textnormal{vect}}}
\def\cof{\mathop{\textnormal{cof}}}
\def\tr{\mathop{\textnormal{tr}}}
\def\div{\mathop{\textnormal{div}}\nolimits}
\def\dev{\mathop{\textnormal{dev}}}
\def\SO{\textnormal{SO}}
\def\Sl{\textnormal{sl}}
\def\Sym{\mathop{\textnormal{Sym}}\nolimits}
\def\Skew{\mathop{\textnormal{Skew}}\nolimits}
\def\sym{\mathop{\textnormal{sym}}}
\def\skew{\mathop{\textnormal{skew}}}
\def\nx{\nabla_{\!x}}
\def\bigpenta{\begin{tikzpicture}[scale=0.15,line cap=round,line join=miter,line width=.15mm]
\draw  (2.,0.)-- (3.,0.);
\draw (3.,0.)-- (3.3090169943749475,0.9510565162951532);
\draw  (3.3090169943749475,0.9510565162951532)-- (2.5,1.5388417685876266);
\draw  (2.5,1.5388417685876266)-- (1.6909830056250525,0.9510565162951536);
\draw  (1.6909830056250525,0.9510565162951536)-- (2.,0.);
\end{tikzpicture}}
\def\pentagon{\begin{tikzpicture}[scale=0.08,line cap=round,line join=miter,line width=.08mm]
\draw  (2.,0.)-- (3.,0.);
\draw (3.,0.)-- (3.3090169943749475,0.9510565162951532);
\draw  (3.3090169943749475,0.9510565162951532)-- (2.5,1.5388417685876266);
\draw  (2.5,1.5388417685876266)-- (1.6909830056250525,0.9510565162951536);
\draw  (1.6909830056250525,0.9510565162951536)-- (2.,0.);
\end{tikzpicture}}
\def\penta{\hbox to 0pt{\hss\pentagon\hss}}
\def\revision#1{{#1}}
\def\squareabove{\lower 4pt\hbox{\scalebox{0.5}{$\square$}}}
\title{Thermodynamics of MacMillan's liquid crystal model\footnote{2020 MSC numbers: Primary: 76A15; Secondary: 80A17.\\
\indent Keywords: Liquid crystals, thermodynamics, second law, internal variable models, objectivity}}
\author{Herv\'e Le Dret$^1$ and Annie Raoult$^2$}{}
\date{\today, \DTMcurrenttime}
\begin{document}
\maketitle

\medskip

{\footnotesize
 \centerline{$^1$Sorbonne Universit\'e, Universit\'e Paris Cit\'e, CNRS, Laboratoire Jacques-Louis Lions, F-75005 Paris, France}
} 

\medskip

{\footnotesize
 \centerline{$^2$Universit\'e Paris Cit\'e, CNRS, MAP5, F-75006 Paris, France}
}

\begin{abstract}
We study liquid crystal models with bulk free energy from the point of view of the second law of thermodynamics. We formulate these models as objective internal variable models. Examples of application are given for the de Gennes free energy. 
\end{abstract}

\maketitle
\section{Introduction}
We are interested in the thermodynamic consistency of certain liquid crystal models.
In addition to standard thermodynamical variables, continuum models for liquid crystals feature quantities called order parameters that reflect the microscopic orientation of the liquid crystal's constituent molecules. These quantities, either in tensor or director vector form, are the defining trait of liquid crystal models.

We work in the general direction initiated by Ericksen \cite{Ericksen59,Ericksen61}, see also \cite{Leslie}, \cite{Eringen} among many others. We are especially interested in work by MacMillan \cite{McM these}, who developed in particular an internal variable theory for liquid crystals, in which the internal variable is the aforementioned order tensor. This theory fits quite well within the approach to thermomechanics with (or without) internal variables that we developed in \cite{HLDAR1}. It should be noted that MacMillan's internal variable model actually takes a minor place in his work, since he gave preference to what he called a field model, in which the order tensor obeys a second order in time differential equation. There seems to be little justification for this kind of inertial model, so we concentrate on the internal variable model with a first order in time differential equation for the order tensor. Our main goal is to give necessary and sufficient conditions for MacMillan's internal variable model to satisfy the second law of thermodynamics, although at times, we have to content ourselves with just sufficient conditions. A secondary goal is to slightly generalize MacMillan's model  in order to give it more flexibility in terms of possible behaviors, while still ensuring that the second law is satisfied. 

The liquid crystal free energies we consider are special cases of Landau-de Gennes bulk energies, see \cite{JMB} as a general reference for the stationary case, in which there is no gradient of the order tensor. In particular, we do not consider such energies as the Oseen-Frank elastic energies, which we plan to address in future work. A word of warning: we use notation that is consistent with the scheme we introduced in  \cite{HLDAR1}, but that is not traditional in liquid crystal modeling, see Sections \ref{section notation} and \ref{section des origines} below.

After explaining our notational scheme in Section \ref{section notation}, we give a brief account of thermomechanics in Section~\ref{section thermo}. Next in Section \ref{section CN}, we perform the Coleman-Noll procedure for a thermomechanical model with an internal variable expressed in the Eulerian description. The specific nature of the internal variable is immaterial here, but we will later apply the results to the case of the order tensor of liquid crystals. These general results are of course nothing new. However, we feel that this crucial step, which determines which constitutive laws will lead to models that satisfy the second law of thermodynamics, is quite often performed with insufficient care in the literature. 

We now switch in Section \ref{section des origines} to liquid crystals properly speaking, by giving succinctly the origin of the de Gennes $Q$-tensor (which we hereafter call $\xi$) and presenting MacMillan's internal variable model. In this model, the stress tensor and internal variable flow rule are assumed to be basically affine with respect to the stretching tensor $d$, and otherwise arbitrary with respect to the temperature $\theta$ and order tensor $\xi$. As mentioned above, the Helmholtz free energy is a bulk energy, that is to say a function of $\theta$ and $\xi$,  without $\nabla\xi$ dependence, a typical example of which is the de Gennes energy.

Sections \ref{section obj}  and \ref{section traceless} are devoted to questions of frame-indifference or objectivity of the constitutive laws for the free energy, the heat flux, the Cauchy stress tensor and the order tensor together with its flow rule, harking back to the pioneering work of Rivlin-Ericksen \cite{Rivlin-Ericksen} for objective constitutive laws and of Zaremba \cite{Zaremba} for objective derivatives. Concerning objective derivatives, we present specific developments taking into account the traceless character of the order tensor.

In Section \ref{section McM compatible}, we study in depth the compatibility of MacMillan's internal variable model with the second law of thermodynamics. Let us detail this section a little bit more. In Proposition \ref{les conditions pour McM}, we first give a set of rather intricate necessary and sufficient conditions on the various constitutive laws that ensures that MacMillan's model can indeed be made compatible with the second law. Some of these conditions were already identified as necessary by MacMillan. We then study the conditions obtained in Proposition \ref{les conditions pour McM} one after the other, in order to provide conditions for the constitutive laws that are workable and as explicit as could achieve. This is the object of Propositions \ref{premiere condition k0}, \ref{une forme explicite non mais !}, \ref{sufficient form 1} and \ref{sigma0*}, which take various points of view, either focusing on Rivlin-Ericksen type representation formulas or on more direct arguments. We obtain in the end a set of sometimes necessary and sufficient, or only sufficient conditions, with or without simplifying assumptions. A simple example of appropriate constitutive laws is given for the case of the de Gennes energy.
 
 Finally, in Section \ref{Section trois pas en avant}, we propose a generalization of MacMillan's model by allowing laws that are quadratic with respect to $d$, however not in maximal generality for simplicity. In the same spirit as before, we obtain necessary and sufficient conditions in Proposition \ref{plus fort que McM...+}. In Proposition \ref{bon ok ça marche}, we show that these conditions can be satisfied and  conclude by again providing a simple example in the case of the de Gennes energy.
\section{Notation}\label{section notation}
We denote the set of $3\times 3$ matrices by $\M_3$, endowed with the Frobenius inner product $F:G=\tr(F^TG)$. We let  $\Sym_3$ be the set of symmetric matrices, $\Skew_3$ the set of skew-symmetric matrices, $\Sl(3)$ the set of trace-free matrices and $\SO(3)$ the set of rotation matrices. For any $M\in\M_3$, $\sym(M)$, $\skew(M)$ and $\dev (M)$ respectively denote the symmetric, skew-symmetric and deviatoric parts of $M$.

As a rule, we denote Eulerian quantities with lowercase letters. 
We thus let $v(x,t)$ for the Eulerian velocity at space-time point $(x,t)$ and $h(x,t)=\nx v(x,t)$ for its gradient. We also let $d(x,t)=\sym(h(x,t))$ for the stretching tensor and $w(x,t)=\skew(h(x,t))$ for the spin tensor. Incompressibility is expressed by $\div_x v(x,t)=0$, \emph{i.e.}, $h(x,t)\in\Sl(3)$. The temperature field is denoted $\theta(x,t)$ and its gradient $g(x,t)=\nx\theta(x,t)$. Since we consider a homogeneous, incompressible material, the mass density $\rho$ remains constant, and $h$, $\theta$ and $g$ will be our main thermodynamic variables. These are complemented by an internal variable $\xi$ with values in some vector space $V$. A specific choice for $\xi$ and $V$ that is adapted to liquid crystal modeling will be introduced later on. We keep it general for the moment.

The material derivative of any scalar-, vector- or tensor-valued Eulerian quantity $z$ is given by  $\dot z=\frac{\partial z}{\partial t}+v_i\frac{\partial z}{\partial x_i}$. In particular, the material derivative $\dot\xi$ of the internal variable is thus also $V$-valued. 

We denote the  Cauchy stress tensor by $\sigma(x,t) $ and the heat flux vector by $q(x,t)$. Some liquid crystal theories are micropolar, but we consider here $\sigma$ to be $\Sym_3$-valued.  In terms of thermodynamic state functions, we use the Helmholtz free energy specific density $a_m$, the internal energy specific density $e_m$ and the entropy specific density $s_m$. 

We make a general local state hypothesis and distinguish between a given quantity and a constitutive law for that same quantity
by using a hat for the latter, \emph{e.g.} $a_m$  for the Helmholtz free density itself as
opposed to $\widehat a_m$ for a constitutive law for it, in the sense that 
$$
    a_m(x,t)=\widehat a_m\bigl(h(x,t),\theta(x,t),g(x,t),\xi(x,t)\bigr)\text{ with }\widehat a_m\colon\Sl(3)\times\R_+^*\times\R^3\times V\to\R \text{ given,}
$$
or for short
$$
    a_m(x,t)=\widehat a_m\bigl((h,\theta,g,\xi)(x,t)\bigr),
$$
and so on.
Following Truesdell's equipresence principle \cite{Truesdell-Noll}, we assume that all constitutive laws a priori take the whole set of thermodynamic variables $(h,\theta,g,\xi)$ as arguments.

\section{Thermomechanical background}\label{section thermo}
Let us give a brief rundown of the standard thermomechanical setup, see \cite{Gurtinandco}  for instance among many other references.
The first basic equation satisfied by any material is the dynamics equation 
\begin{equation}\label{dynamics eqn}
    \rho\dot v-\div_x\sigma=b,
\end{equation}
where $b$ is the external body force density, together with appropriate initial and boundary conditions. We will not be concerned with such conditions here.

Let us then  recall the local differential form of the first law of thermodynamics, which just says that the time derivative of the total energy is equal to the sum of all power sources, either mechanical or thermal. In terms of the specific density for the internal energy, this reads
\begin{equation}\label{1st law}
    \rho\dot e_m=\sigma:h-\div_xq+r,
\end{equation}
where $r$ is the external thermal power source. Note that the dynamics equation was used to take into account the kinetic energy and remove the power due to the external body force $b$ from equation \eqref{1st law} for the internal energy. We will also refer to the first law as the energy equation.
It is assumed that both $b$ and $r$ can  in principle be set arbitrarily, even though not necessarily in practical terms. Note that since $\sigma$ is symmetric, we have $\sigma:h=\sigma:d$, \emph{i.e.}, the spin tensor does not produce any power.

The second law of thermodynamics or Clausius-Duhem inequality assumes that there exists a quantity $s$ called entropy that varies faster than its sources, which consist of thermal power terms divided by the temperature. In terms of specific densities, this reads
\begin{equation}\label{2nd law}
    \rho\dot s_m\ge -\div_x\Bigl(\frac{q}\theta\Bigr)+\frac{r}\theta.
\end{equation}
where for simplicity, we do not include any extra entropy flux, see \cite{Muller} for the introduction of an extra entropy flux. It is customary to introduce the internal dissipation 
\begin{equation}\label{dissipation interne}
    d_{\textnormal{int}}=\rho\theta \dot s_m+\div_xq-r,
\end{equation}
so that the evolution of the entropy is now, sort of tautologically, expressed as 
\begin{equation}\label{entropy balance}
    \rho\dot s_m= \frac{d_{\textnormal{int}}-\div_xq+r}\theta,
\end{equation}
and the internal dissipation appears as an internal power source. More importantly, the Clausius-Duhem inequality can be rewritten as 
\begin{equation}\label{2nd law 2}
    d_{\textnormal{int}}-\frac{q\cdot\nx\theta}{\theta}\ge 0.
\end{equation}

The Helmholtz free energy, with specific density $a_m=e_m-\theta s_m$, is especially well-suited to working with the second law, since, on combining it with the first law \eqref{1st law}, we obtain the alternate expression for the internal dissipation,
\begin{equation}\label{dissipation interne 2}
    d_{\textnormal{int}}=-\rho(\dot a_m+\dot \theta  s_m)+\sigma:d.
\end{equation}
This makes it clearly internal, since the external thermal power source $r$ and the heat flux $q$ no longer appear explicitly, whereas the internal mechanical power term $\sigma:d$ comes in. Conversely, assuming \eqref{dissipation interne 2} and \eqref{1st law}, we naturally recover \eqref{dissipation interne}.

The Clausius-Duhem inequality is often broken into two independent inequalities called the Clausius-Planck inequalities,
\begin{equation}\label{CP mech}
    d_{\textnormal{int}}\ge 0,
\end{equation}
on the one hand, and
\begin{equation}\label{CP therm}
    q\cdot\nx\theta\le 0,
\end{equation}
on the other hand. For reasons that will appear later, we will call \eqref{CP mech} the mechanical/internal Clausius-Planck inequality and \eqref{CP therm} the thermal Clausius-Planck inequality. Now of course, the Clausius-Planck inequalities imply the Clausius-Duhem inequality, but there is no reason for the reverse implication to hold in general. See \cite{HLDAR1} for the identification of some general cases in which the Clausius-Duhem inequality and the Clausius-Planck actually are equivalent.  

\section{The Eulerian Coleman-Noll procedure with an internal variable}\label{section CN}

The dynamics equation \eqref{dynamics eqn} and energy equation \eqref{1st law} are underdetermined partial differential equations, since they are assumed to hold for any (non micropolar) material, in addition to lacking initial and boundary conditions. They primarily need to be complemented with some constitutive information describing what kind of material we are talking about.

As indicated before, we pick a set of thermodynamic variables, namely $h=\nx v$ with values in $\Sl(3)$, $\theta$ with values in $\R_+^*$, $g=\nx \theta$ with values in $\R^3$ and $\xi$ with values in a finite dimensional Euclidean vector space~$V$ with inner product denoted by a $\cdot$ as in $\R^3$  for the time being. Note that we are aiming at liquid crystal models, which are incompressible and basically mechanically fluid, hence the absence of any elastic kind of variable.

The local state hypothesis consists in assuming that all thermodynamic state functions, as well as (most of the) the stress tensor and heat flux vector, depend on  space-time points $(x,t)$ through the values at these same points of the thermodynamic variables via constitutive laws. 

Specifically, using the hat notational device explained earlier, we consider constitutive laws given by functions
\begin{itemize}
\item $\widehat a_m\colon\Sl(3)\times\R_+^*\times\R^3\times V\to\R$ for the Helmholtz free energy specific density,
\item $\widehat s_m\colon\Sl(3)\times\R_+^*\times\R^3\times V\to\R$ for the entropy specific density (combining the two provides a constitutive law $\widehat e_m=\widehat a_m+\theta \widehat s_m$ for the internal energy density),
\item $\widehat q\colon\Sl(3)\times\R_+^*\times\R^3\times V\to\R^3$ for the heat flux vector.
\end{itemize}

There is a slight twist concerning the stress tensor $\sigma$ since it is well known that in the incompressible case, it can only satisfy the local state hypothesis up to \revision{an indeterminate scalar tensor}. Therefore, we also consider a constitutive law for the stress tensor $\widehat \sigma\colon\Sl(3)\times\R_+^*\times\R^3\times V\to\Sym_3$, with the local state hypothesis 
$$
    \sigma(x,t)=\widehat \sigma\bigl((h,\theta,g,\xi)(x,t)\bigr)-p(x,t)I,
$$
\revision{where $p$} is not locally given by a constitutive law. \revision{This indeterminate term is actually the Lagrange multiplier associated with the incompressibility constraint.} Only $\widehat\sigma$ will play a role in the second law considerations below.
 Note that it is not always advisable to require $\widehat\sigma$ to represent the deviatoric part of $\sigma$. For instance, in the case of the Oldroyd B fluid considered in \cite{HLDAR2}, we had $\widehat\sigma(h,\theta,g,\xi)=2\eta_s d+\xi$, and $\tr\xi$ necessarily varied in time and contributed to the dissipation. Therefore, the constitutive part of the stress tensor should definitely not be a priori assumed to be traceless, even though traceless constitutive laws can also be appropriate.

All state functions and fluxes now have a constitutive law.  It is one of the main outcomes of the Coleman-Noll procedure to provide one such law for the internal dissipation \eqref{dissipation interne} or \eqref{dissipation interne 2}.

Since we are dealing with an internal variable theory, we need one more constitutive ingredient to serve as a flow rule for $\xi$, \emph{i.e.}, a function $\widehat k\colon\Sl(3)\times\R_+^*\times\R^3\times V\to V$ as a right-hand side of a differential equation in time for $\xi$,
\begin{equation}\label{edoxi}
    \dot \xi(x,t)=\widehat k \bigl((h,\theta,g,\xi)(x,t)\bigr),
\end{equation}
see \cite{HLDAR1,HLDAR2} and \cite{Maugin} for a panorama of the vast literature on the subject of internal variables. For local existence and uniqueness purposes, we assume that $\widehat k$ is continuous and locally Lipschitz with respect to its last variable uniformly with respect to the others.

The Coleman-Noll procedure \cite{Coleman-Mizel,ColemanNoll} is classically used to derive necessary and sufficient conditions on the constitutive laws ensuring that the resulting model automatically satisfies the second law of thermodynamics, in the best case scenario. Even though its principle is classic, it nonetheless deserves to be performed carefully. Let us go through this procedure in the present context, in a very parallel way to what we did in a Lagrangian setting in \cite{HLDAR1}. 

\begin{proposition}\label{restrictions thermo avec variables internes}
The two laws of thermodynamics and the dynamics equation imply that
 
 i) the specific free energy density $\widehat a_m$ is only a function of $\theta$ and $\xi$,
 
 ii) the specific entropy density $\widehat s_m$ is only a function of $\theta$ and $\xi$ with
\begin{equation}\label{relation entropie}
    \widehat s_m(\theta,\xi)=-\frac{\partial \widehat a_m}{\partial \theta}(\theta,\xi),
\end{equation}

 iii) there is a constitutive law for the internal dissipation given by
\begin{equation}\label{loi dissipation interne}
    \widehat d_{\textnormal{int}}(h,\theta,g,\xi)=\widehat \sigma(h,\theta,g,\xi):h-\rho\frac{\partial\widehat a_m}{\partial \xi}(\theta,\xi)\cdot \widehat k(h,\theta,g,\xi),
\end{equation}
which satisfies  the dissipation inequality
\begin{equation}\label{dissipation ineq}
    \widehat d_{\textnormal{int}}(h,\theta,g,\xi)-\frac{\widehat q(h,\theta,g,\xi)\cdot g}{\theta}\ge 0.
\end{equation}

 Conversely, if the constitutive laws satisfy i), ii) and iii), then the second principle is satisfied for any smooth evolution $(v(x,t),\theta(x,t))$ corresponding to an adapted body force $b(x,t) $ and thermal power source $r(x,t)$.
\end{proposition}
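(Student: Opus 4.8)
First I would reduce the second law to the form it already takes in the excerpt. Combining the Clausius--Duhem inequality \eqref{2nd law} with the first law \eqref{1st law} and the definition $a_m=e_m-\theta s_m$ produces, exactly as in the passage from \eqref{dissipation interne 2} to \eqref{2nd law 2}, the equivalent requirement $-\rho(\dot a_m+\dot\theta\,s_m)+\sigma:d-\theta^{-1}q\cdot g\ge0$. I would then substitute the constitutive laws. Writing $a_m(x,t)=\widehat a_m(h,\theta,g,\xi)$ and differentiating along the motion, the chain rule gives
\[
\dot a_m=\frac{\partial\widehat a_m}{\partial h}:\dot h+\frac{\partial\widehat a_m}{\partial\theta}\dot\theta+\frac{\partial\widehat a_m}{\partial g}\cdot\dot g+\frac{\partial\widehat a_m}{\partial\xi}\cdot\dot\xi,
\]
with $\dot\xi=\widehat k(h,\theta,g,\xi)$ by the flow rule \eqref{edoxi}; moreover $\sigma:d=\widehat\sigma:d=\widehat\sigma:h$, since $\widehat\sigma$ is symmetric and $\tr d=\div_x v=0$, so the indeterminate pressure $p$ drops out. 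The reduced inequality then reads schematically $A_h:\dot h+A_\theta\dot\theta+A_g\cdot\dot g+R\ge0$, where $A_h=-\rho\,\partial_h\widehat a_m$, $A_\theta=-\rho(\partial_\theta\widehat a_m+\widehat s_m)$, $A_g=-\rho\,\partial_g\widehat a_m$, and $R=\widehat\sigma:h-\rho\,\partial_\xi\widehat a_m\cdot\widehat k-\theta^{-1}\widehat q\cdot g$ depends only on the state $(h,\theta,g,\xi)$.

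Next comes the Coleman--Noll step proper. At a fixed space--time point one can prescribe the state $(h,\theta,g,\xi)$ and the rates $(\dot h,\dot\theta,\dot g)$ arbitrarily and independently: choosing $v$ and $\theta$ to be suitable polynomials in $(x,t)$ realizes any admissible $1$-jet (the incompressibility of $v$ only forcing $h$ and $\dot h$ to be trace free, which is exactly the range on which the partial derivatives of $\widehat a_m$ on $\Sl(3)$ act, and $\xi$ is obtained by solving \eqref{edoxi} with the prescribed value at the point), after which $b$ and $r$ are \emph{defined} through \eqref{dynamics eqn} and \eqref{1st law} respectively — legitimate precisely because $b$ and $r$ are assumed to be at our disposal. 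Since the reduced inequality is affine in $(\dot h,\dot\theta,\dot g)$ and must hold for all of them, the coefficients must vanish. From $A_h=0$ and $A_g=0$ one obtains $\partial_h\widehat a_m=0$ and $\partial_g\widehat a_m=0$, i.e. i); from $A_\theta=0$ one obtains $\widehat s_m=-\partial_\theta\widehat a_m$, and since the right-hand side now depends only on $(\theta,\xi)$, so does $\widehat s_m$, i.e. ii). What survives is $R\ge0$, which is \eqref{dissipation ineq} once the first two terms of $R$ are identified with $\widehat d_{\textnormal{int}}$, i.e. \eqref{loi dissipation interne}; this is iii).

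For the converse I would simply reverse these manipulations. Given a smooth evolution $(v,\theta)$, put $\widehat e_m=\widehat a_m+\theta\widehat s_m$ and define $b:=\rho\dot v-\div_x\sigma$ and $r:=\rho\dot e_m-\sigma:h+\div_x q$, so that \eqref{dynamics eqn} and \eqref{1st law} hold by construction. Using i) and ii), $\dot a_m+\dot\theta\,s_m=\partial_\xi\widehat a_m\cdot\dot\xi=\partial_\xi\widehat a_m\cdot\widehat k$, whence \eqref{dissipation interne 2} yields $d_{\textnormal{int}}=-\rho\,\partial_\xi\widehat a_m\cdot\widehat k+\widehat\sigma:h=\widehat d_{\textnormal{int}}$ by iii); then \eqref{dissipation ineq} is exactly the reduced inequality \eqref{2nd law 2}, which, run backwards through the first law, is equivalent to \eqref{2nd law}. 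Hence the second law holds along every such process.

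The one point that requires real care, and which I expect to be the main obstacle, is the admissibility claim in the Coleman--Noll step: one must check that the chosen fields are genuinely consistent with $h=\nabla_x v$ and $g=\nabla_x\theta$ (whose material derivatives bring in one further $x$-derivative and the convective term $v_i\partial_{x_i}$), that the incompressibility constraint can be maintained so that the Lagrange multiplier $p$ really never enters the dissipation, and that \eqref{edoxi} can be solved so as to attain the desired value of $\xi$ at the point. Everything else is bookkeeping with the chain rule.
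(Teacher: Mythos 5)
Your proposal is correct and follows essentially the same route as the paper: reduce the Clausius--Duhem inequality to an expression affine in the rates $(\dot h,\dot\theta,\dot g)$, realize an arbitrary admissible $1$-jet at a point by explicit test fields (the paper uses $v(x,t)=(\bar h+t\bar m)(x-\bar x)$ and an explicit temperature field, with $\xi$ obtained from Picard--Lindel\"of and $b$, $r$ defined a posteriori from the balance laws), and conclude by arbitrariness of the coefficients. The admissibility point you flag as the main obstacle is exactly what the paper settles by writing these fields down, and your sketch of how to do so is sound.
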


\begin{proof}
Let $\omega\subset\R^3$ be a domain that is filled by part of the material body under consideration at time $t=0$. For any given smooth velocity field $v$ on $\R^3\times\R$, this domain evolves smoothly as $\omega_v(t)$.
The Coleman-Noll procedure consists in testing the Clausius-Duhem inequality with simple fields $v(x,t)$ and $\theta(x,t)$ on $\R^3\times\R$ restricted to $\omega_v(t)$, plus internal variable evolutions with given initial values. Such evolutions can be solutions of the dynamics equation \eqref{dynamics eqn}  and of the energy equation \eqref{1st law}, at least in principle. It is enough to adjust the source terms, namely the applied body force density for the dynamics equation
$$
    b(x,t)=\rho\dot v(x,t)-\div\nolimits_{x}\widehat\sigma\bigl((h,\theta,g,\xi)(x,t)\bigr)+\nx p(x,t)
$$
in $\omega_v(t)$ \revision{for any $p$}, and the heat source 
\begin{multline*}
    r(x,t)=\rho\bigl(\dot{\overbrace{\widehat e_m\bigl((h,\theta,g,\xi)(x,t)\bigr)}}\bigr)-\widehat\sigma\bigl((h,\theta,g,\xi)(x,t)\bigr):d(x,t)\\+{\div\nolimits_x}\bigl(\widehat q\bigl((h,\theta,g,\xi)(x,t)\bigr)\bigr).
\end{multline*}
likewise for the energy equation, as well as surface tractions and heat fluxes on the boundary of $\omega_v(t)$.

We know take the Clausius-Duhem inequality in the form \eqref{2nd law 2} using expression \eqref{dissipation interne 2} for the internal dissipation, use the constitutive laws for the free energy, entropy, stress tensor, heat flux and flow rule for the internal variable~\eqref{edoxi}, and apply the chain rule. For brevity, $\frac{\partial \widehat a_m}{\partial \theta}$ stands for $\frac{\partial \widehat a_m}{\partial \theta}\bigl((h,\theta,g,\xi)(x,t)\bigr)$ and so on. We thus obtain
\begin{equation}\label{ineg fond internes}
    - \rho\Bigl(\frac{\partial \widehat a_m}{\partial \theta}+\widehat s_m\Bigr)\dot \theta+\widehat\sigma:d-\rho\frac{\partial\widehat a_m}{\partial h}:\dot h-\rho\frac{\partial\widehat a_m}{\partial g}\cdot\dot g
    -\rho\frac{\partial\widehat a_m}{\partial \xi}\cdot\widehat k-\frac{\widehat q\cdot g}{\theta}\ge0,
\end{equation}
where the \revision{the Lagrange multiplier $p$} no longer appears due to the incompressibility condition.
 
Let us choose $\bar x\in \omega$ and $\bar h,\bar m\in\Sl(3)$ and set $v(x,t)=(\bar h+t\bar m)(x-\bar x)$. 
We note that $v(\bar x,0)=0$ so that, for any Eulerian quantity $z$,  $\dot z(\bar x,0)=\frac{\partial z}{\partial t}(\bar x,0)$. Since $h(x,t)=\bar h+t\bar m$, it follows that $h(\bar x,0)=\bar h$ and $ \dot h(\bar x,0)=\bar m$.

Next we take $\bar\theta\in \R_+^*$, $\bar\vartheta\in\R$, $\bar g,\bar\ell\in\R^3$, and set $\theta(x,t)=\frac{\bar \theta}2\bigl( \exp\bigl(\frac{2(x-\bar x)\cdot(\bar g+t\bar\ell)}{\vrule height7pt width0pt \bar\theta}\bigr)+\exp\bigl(\frac{2\bar\vartheta t}{\vrule height7pt width0pt \bar\theta}\bigr)\bigr)$. It is easy to check that $\theta(\bar x,0)=\bar\theta$, $\dot\theta(\bar x,0)=\bar\vartheta$, $g(\bar x,0)=\bar g$ and $\dot g(\bar x,0)=\bar \ell$.

Lastly, we take $\bar\xi\in V$ and set $\xi_0(x)=\bar\xi$. 
By the Picard-Lindelöf theorem, there exists $\xi(x,t)$ satisfying the Cauchy problem for equation \eqref{edoxi} with initial datum $\xi_0$ at least locally in time.
 
We assume first $\bar\vartheta=0$ and substitute the above values in \eqref{ineg fond internes} at $(x,t)=(\bar x,0)$. This yields
\begin{multline*}\label{ineq intermediaire}
    \widehat\sigma(\bar h,\bar\theta,\bar g,\bar\xi): \bar h -\rho\Bigl(\frac{\partial\widehat a_m}{\partial \xi}\cdot\hat k\Bigr)(\bar h,\bar\theta,\bar g,\bar\xi)\\
    -\rho\frac{\partial\widehat a_m}{\partial h}(\bar h,\bar\theta,\bar g,\bar\xi):\bar m-\rho\frac{\partial\widehat a_m}{\partial g}(\bar h,\bar\theta,\bar g,\bar\xi)\cdot \bar\ell
    -\frac{\widehat q(\bar h,\bar\theta,\bar g,\bar\xi)\cdot \bar g}{\bar\theta}\ge0.
\end{multline*} 

Since $\bar m\in \Sl(3)$ and $\bar \ell\in\R^3$ are arbitrary, it follows that $\frac{\partial\widehat a_m}{\partial h}=0$ and $\frac{\partial\widehat a_m}{\partial g}=0$. Therefore, $\widehat a_m$ depends neither on $h$, nor on $g$, which is assertion i).

Secondly,  we take the same $v$, $\theta$ and $\xi$ as before with $\bar\vartheta$ arbitrary.
Taking into account assertion i), at point $(\bar x,0)$, inequality \eqref{ineg fond internes} becomes
\begin{multline*}
     -\rho\Bigl(\frac{\partial \widehat a_m}{\partial \theta}(\bar\theta,\bar\xi)+\widehat s_m(\bar h,\bar\theta,\bar g,\bar\xi)\Bigr)\bar\vartheta+\widehat\sigma(\bar h,\bar\theta,\bar g,\bar\xi): \bar h\\ -\rho\frac{\partial\widehat a_m}{\partial \xi}(\bar\theta, \bar\xi)\cdot\hat k(\bar h,\bar\theta,\bar g,\bar\xi)-\frac{\widehat q(\bar h,\bar\theta,\bar g,\bar\xi)\cdot \bar g}{\bar\theta}\ge0.
\end{multline*}
Since $\bar\vartheta$ is arbitrary, it follows that
$$
    \widehat s_m(\bar h,\bar\theta,\bar g,\bar\xi)=-\frac{\partial \widehat a_m}{\partial \theta}(\bar\theta,\bar\xi),
$$
which depends neither on $\bar h$, nor on $\bar g$, and relates the constitutive law for the entropy to that of the Helmholtz free energy, that is to say, assertion ii). 

To conclude, we note that the function $\widehat d_{\textnormal{int}}$ defined in \eqref{loi dissipation interne}, is actually a constitutive law for the internal dissipation $d_{\textnormal{int}}$ defined earlier in \eqref{dissipation interne}, in terms of our chosen thermodynamic variables. 
Indeed, in view of \eqref{relation entropie} and the previous calculations, 
\begin{align*}
    d_{\textnormal{int}}(x,t)&=-\rho(\dot a_m(x,t)+\dot \theta  s_m(x,t))+\sigma(x,t):d(x,t)\\
    &=-\rho\frac{\partial\widehat a_m}{\partial \xi}\bigr((\theta,\xi)(x,t)\bigl)\cdot\widehat k\bigr((h,\theta, g,\xi)(x,t)\bigl)+\widehat\sigma\bigr((h,\theta, g,\xi)(x,t)\bigl):h(x,t)\\
    &=\widehat d_{\textnormal{int}}\bigr((h,\theta, g,\xi)(x,t)\bigl),
\end{align*}
(recalling that $\sigma:d=\widehat\sigma:h$). Assertion iii) is now established.

Conversely, if all these constitutive assumptions hold, then the Clausius-Duhem inequality \eqref{2nd law 2} is  satisfied in all smooth evolutions $v(x,t),\theta(x,t)$. Indeed, it is again enough to express all the state functions using the constitutive laws and adjust the external body force and thermal power source accordingly.
\end{proof}

\begin{remark}
It should be emphasized that, without the Coleman-Noll treatment of the second principle, the internal dissipation does not a priori have a constitutive law in terms of the chosen thermodynamic variables.
\end{remark}

\begin{remark}
As in \cite{HLDAR1}, we notice that if $\widehat\sigma$ and $\widehat k$ do not depend on $g$, and $\widehat q$ does not depend on $h$, then the Clausius-Planck inequalities necessarily hold. We will make this simplifying assumption  in the sequel. In this case, we see that $\widehat d_{\textnormal{int}}$ only contains  mechanical and internal terms, with the temperature as parameter, and $\widehat q\cdot g$ only thermal terms, with the internal variable as parameter, hence the names mechanical/internal and thermal Clausius-Planck inequalities. 
\end{remark}

\begin{remark}
If we now use all the information we obtained on constitutive laws into the dynamics equation and the entropy evolution equation, we get a highly coupled system of partial differential equations in the unknowns $(v,\theta)$,
$$
    \left\{\begin{aligned}
    &\rho \dot v-\div_x\widehat\sigma(h,\theta,\xi)+\nx p=b,\\
    &\div_x v=0,
    \end{aligned}\right.
$$
with 
$\div_x\widehat\sigma(h,\theta,\xi)_i=\mu_{ijmn}(h,\theta,\xi)\frac{\partial^2v_m}{\partial x_j\partial x_n}+\frac{\partial\widehat \sigma_{ij}}{\partial\theta}(h,\theta,\xi)\frac{\partial\theta}{\partial x_j}+\frac{\partial\widehat \sigma_{ij}}{\partial\xi}(h,\theta,\xi)\cdot\frac{\partial\xi}{\partial x_j}$, where the tensor $\mu_{ijmn}(h,\theta,\xi)=\frac{\partial\widehat \sigma_{ij}}{\partial h_{mn}}(h,\theta,\xi)$ is a sort of viscosity tensor, and,
\begin{multline*}
    -\rho\theta\frac{\partial^2\widehat a_m}{\partial\theta^2}(\theta,\xi)\dot\theta+\frac{\partial \widehat q_i}{\partial g_j}(\theta,g,\xi)\frac{\partial^2\theta}{\partial x_j\partial x_i}-\rho\theta\frac{\partial^2\widehat a_m}{\partial\theta\partial\xi}(\theta,\xi)\cdot\widehat k(h,\theta,\xi)\\+\frac{\partial \widehat q_i}{\partial\theta}(\theta,g,\xi) \frac{\partial\theta}{\partial x_i}+\frac{\partial \widehat q_i}{\partial\xi}(\theta,g,\xi)\frac{\partial\xi}{\partial x_i}
    =\widehat d_{\textnormal{int}}(h,\theta,\xi)+r,
\end{multline*}
where the first two terms are evocative of a quasilinear convection-diffusion equation for the temperature, with internal and external heat sources in the right-hand side. 
 
In addition to this, the differential equation for $\xi$ must hold
$$
    \dot \xi(x,t)=\widehat k \bigl((h,\theta,\xi)(x,t)\bigr).
$$
Even when complemented with appropriate initial and boundary conditions, there is nothing in the second law of thermodynamics that can endow this system with some kind of hyperbolicity/parabolicity that could make it mathematically well-posed in general. Additional constitutive hypotheses are required. However, the system is presumably way too general for it to be reasonable to expect much in this direction. It should already be noted that the simplest Newtonian choice  $\widehat\sigma(h,\xi,\theta)=2\mu d$ with dynamic viscosity $\mu>0$, yields the Navier-Stokes equations for the dynamics. 
\end{remark}

\section{MacMillan's internal variable liquid crystal\\ model}\label{section des origines}
We refer to \cite{JMB} for background on general liquid crystal modeling. We are specifically interested in studying and generalizing a dynamic internal variable model introduced by MacMillan in \cite{McM these}. It should be noted that this internal variable model is not the central topic of his work, and that it is therefore not much studied therein. 

We consider here the case when the order parameter is the so-called de~Gennes $Q$-tensor, denoted $S$ in \cite{McM these}, which is a traceless, symmetric valued
tensor. In order to remain consistent with our current notational scheme, we will stray from tradition and call this order tensor $\xi$, with apologies to everyone used to $Q$. As a reminder and for the sake of completeness, this tensor is defined as
\begin{equation}\label{def Q traditionnel}
    \xi (x,t)= \int_{S^2} p\otimes p \, d\mu_{x,t}(p) - \frac 1 3 I,
\end{equation}
where $\mu_{x,t}$ is a probability measure on the unit sphere $S^2$ that describes the distribution of the directions $p$ of the many molecules that are present at a given point  $x$ at time $t$. The tensor $\xi$ thus appears as a second moment of the probability measure $\mu(x,t)$, or autocorrelation matrix of a $S^2$-valued random variable distributed according to $\mu,$ shifted by that of the uniform distribution, in order to quantify the deviation from uniform distribution. It follows from formula \eqref{def Q traditionnel} that $\tr\xi=0$.

Our choice for $V$ is thus now $V=\Sl(3)\cap\Sym_3$, equipped with the Frobenius inner product. Gradients with respect to $\xi$ are meant in this space and relative to this inner product.

In addition to $\xi$ being traceless and symmetric, it is clear from the above definition that $\xi$ should be such that $-\frac13 I\le \xi \le\frac23 I$.  In particular, its eigenvalues $\lambda$ should satisfy the Ericksen inequalities, for instance $-\frac13\le \lambda\le\frac23$, see \cite{Ericksen}. We will not  attempt to capture these boundedness and eigenvalue constraints here.

The internal variable liquid crystal model proposed by MacMillan falls within the scope of our present framework, with some specific choices for the constitutive laws for the stress tensor and for the flow rule. He assumes the stress tensor law to be of the form
\begin{equation}\label{stressMcM}
    \widehat\sigma(h,\theta,\xi)=\widehat\sigma_0(\theta,\xi)+\widehat \sigma_1(\theta,\xi)[d],
\end{equation}
where $\widehat\sigma_0\colon \R_+^*\times V\to \Sym_3$ and $\widehat \sigma_1\colon \R_+^*\times V\to \mathcal{L}(V;\Sym_3)$ are given functions. 
For the flow rule, he takes
\begin{equation}\label{flowruleMcM}
    \widehat k(h,\theta,\xi)=w\xi-\xi w+\widehat k_0(\theta,\xi)+\widehat k_1(\theta,\xi)[d],
\end{equation}
where $\widehat k_0\colon \R_+^*\times V\to V$ and $\widehat k_1\colon \R_+^*\times V\to \mathcal{L}(V;V)$
are also given functions. It is clear that $\widehat k$ is $V$-valued. The evolution of the order tensor is thus governed by the differential equation
\begin{equation}\label{edo ordre}
    \dot \xi(x,t)=w\xi-\xi w+\widehat k_0(\theta,\xi)+\widehat k_1(\theta,\xi)[d].
\end{equation}

In terms of the Helmholtz free energy, his assumption is an unspecified $\widehat a_m(\theta,\xi)$, where we directly remove the a priori dependence on $h$ and $g$ in view of Proposition \ref{restrictions thermo avec variables internes}. 
This includes the bulk energies considered by Ball \cite{JMB},
\begin{equation}\label{bulk JMB}
    \widehat a_m(\theta,\xi)=\widehat\psi(\theta,\tr(\xi^2),\tr(\xi^3)),
\end{equation}
where $\widehat \psi\colon\R_+^*\times\R_+\times \R$ is an arbitrary function,
see also Section \ref{section obj} below. In the sequel, we will always assume that $\widehat\psi$ is as regular as needed.
Among these energies is the de Gennes energy \cite{JMB,PGdG},
\begin{equation}\label{bulk dG}
    \widehat a_m(\theta,\xi)=\alpha(\theta-\theta^*)\tr(\xi^2)-\frac{2b}3\tr(\xi^3)+c\tr\xi^4,
\end{equation}
since  $\tr\xi^4 = \frac12 (\tr(\xi^2))^2$ on $V$, where $\alpha>0$, $b>0$, $c>0$, and $\theta^*>0$ are constants. In this case, we get $\widehat s_m(\theta,\xi)=-\alpha\tr(\xi^2)$ and $\widehat e_m(\theta,\xi)=-\alpha\theta^*\tr(\xi^2)-\frac{2b}3\tr(\xi^3)+c\tr\xi^4$, both being independent of $\theta$.

An important particular case is that of a uniaxial nematic phase in which the order tensor takes the form 
$$
    \xi=s\Bigl(n\otimes n-\frac13I\Bigr),
$$
where $-\frac12\le s\le 1$ is a scalar order parameter and $n$ is a unit vector.

In the uniaxial case, the de Gennes free energy becomes somewhat degenerate,
\begin{equation}\label{bulk dG uniax}
    \widehat a_m(\theta,\xi)=\frac23\alpha(\theta-\theta^*)s^2-\frac{4b}{27}s^3+\frac{2c}{9}s^4,
\end{equation}
as are all bulk energies of the form \eqref{bulk JMB}, which are then only  arbitrary functions of $\theta$ and $s$.
We can still however consider flow rules that take $n$ or $\xi$ into account. 

More interesting free energies for such uniaxial situations, like the Oseen-Frank energy, or more generally Landau-de Gennes energies, involve gradients such as $\nx n$ or $\nx\xi$. We plan to investigate such energies from the internal variable viewpoint in future work.

We also need to add a heat flux vector constitutive law $\widehat q$. As already mentioned, in order to only have to deal with the Clausius-Planck inequalities, we assume that $\widehat q$ does not depend on $h$.

There are further restrictions on the constitutive laws due to objectivity, which we discuss below.

\section{Model objectivity}\label{section obj}
A major requirement for the validity of continuum models is that of objectivity, or frame-indifference \cite{Truesdell-Noll}, \emph{i.e.}, invariance under superimposed translations and rotations. 
Let us thus be given  arbitrary smooth functions  $R$ from $\R$  into  $\SO(3)$ and $z$ from $\R$ into $\R^3$, let $(x^*,t)=(R(t)x+z(t),t)$ for brevity, and attach stars to all translated and rotated quantities. Under such transformations, all scalar-valued functions should be invariant, 
\begin{equation}\label{invariance scalaires}
    \theta^*(x^*,t)=\theta(x,t),\quad a_m^*(x^*,t)=a_m(x,t).
\end{equation}
For vector-valued flux quantities, frame-indifference reads
\begin{equation}\label{invariance vecteurs}
    q^*(x^*,t)=R(t)q(x,t),
\end{equation}
and for tensor-valued quantities such as the Cauchy stress tensor,
\begin{equation}\label{expression AIM}
    \sigma^*(x^*,t)=R(t)\sigma(x,t)R(t)^T.
\end{equation}

The status of any internal variable with respect to frame-indifference depends entirely on what kind of phenomenon the internal variable is supposed to represent. In the case of liquid crystals, in view of the definition \eqref{def Q traditionnel} of the order tensor, we see that it satisfies\cite{JMB,Ericksen}
\begin{equation}\label{invariance xi}
    \xi^*(x^*,t)=R(t)\xi(x,t)R(t)^T.
\end{equation}

Objectivity requirements translate as constraints on the form of the constitutive laws. Those are well-known and listed below. First of all, the constitutive law for the free energy density must be objective, in the sense that for all $\theta\in\R_+^*$, $\xi\in V$, and $R\in\SO(3)$, it must satisfy
\begin{equation}\label{invariance Helmoltz}
    \widehat a_m(\theta,R\xi R^T)=\widehat a_m(\theta,\xi).
\end{equation}
This is the case if and only if $\widehat a_m$ is of the form \eqref{bulk JMB}, see \cite{JMB}, where $\widehat\psi(\theta,\tau_2,\tau_3)$ is a smooth enough function.

For future use, we give the expression of the gradient of $\widehat a_m$ with respect to $\xi$, in terms of this function~$\widehat\psi$, assuming the latter is smooth,
\begin{equation}\label{forme de da/dxi}
    \frac{\partial\widehat a_m}{\partial\xi}(\theta,\xi)=2\frac{\partial\widehat\psi}{\partial\tau_2}(\theta,\tr(\xi^2),\tr(\xi^3))\xi+3\frac{\partial\widehat\psi}{\partial\tau_3}(\theta,\tr(\xi^2),\tr(\xi^3))\dev(\xi^2).
\end{equation}
This follows from the fact that, for any integer $j$,
$$
    \frac{\partial(\tr\xi^j)}{\partial\xi}=j\dev(\xi^{j-1}).
$$
In particular, we always have $ \frac{\partial\widehat a_m}{\partial\xi}(\theta,0)=0$
since $\widehat\psi$ is smooth.

It is shown in \cite{HLDAR1}, in the case of compressible fluids without internal variables, that all fluid, frame-indifferent heat flux constitutive laws must take the form of a nonlinear Fourier law,
$\widehat q(\rho,\theta,g)=-\widehat \kappa(\rho,\theta,\|g\|)g$,
where $\widehat\kappa\colon\R_+^*\times\R_+^*\times\R_+\to\R$ is a given function. This characterization still holds for incompressible fluids with $\rho$ constant, which we can thus remove from the list of arguments of $\widehat q$ and $\widehat\kappa$. If we want to include frame-indifferent heat flux effects due to $\xi$, we can append $\tr(\xi^2)$ and $\tr(\xi^3)$ to this list, thus making $\widehat\kappa$ a real-valued function defined on $\R_+^*\times\R_+\times\R_+\times\R$,
with 
\begin{equation}\label{representation q}
    \widehat q(\theta,g,\xi)=-\widehat \kappa(\theta,\|g\|,\tr(\xi^2),\tr(\xi^3))g.
\end{equation}

The general form of a frame-indifferent, or objective, $\Sym_3$-valued law in two matrix variables, one of which is the velocity gradient $h$, not necessarily incompressible, and the other is a frame indifferent symmetric $\xi$, not necessarily traceless either, is written below in the case of the  constitutive law for the Cauchy stress tensor $\widehat\sigma$, but we will use it again later for the flow rule.

 It is first of all well-known that $\widehat\sigma$ should only depend on $h$ through its stretching tensor $d=\sym(h)$, which is frame-indifferent. Secondly, its general form as a function of $(d,\xi)$ follows from a Rivlin-Ericksen result found in \cite{Rivlin-Ericksen}, slightly corrected in \cite{Rivlin} and then in \cite{Smith}. We follow the latter here,
\begin{multline}\label{stress R-E}
    \widehat\sigma(d,\theta,\xi)=\widehat\alpha_0(d,\theta,\xi)I+\widehat\alpha_1(d,\theta,\xi)d+\widehat\alpha_2(d,\theta,\xi)\xi\\+\widehat\alpha_3(d,\theta,\xi)d^2
    +\widehat\alpha_4(d,\theta,\xi)\xi^2+\widehat\alpha_5(d,\theta,\xi)(\xi d+d\xi)\\
    +\widehat\alpha_6(d,\theta,\xi)(\xi^2 d+d\xi^2)+\widehat\alpha_7(d,\theta,\xi)(\xi d^2+d^2\xi),
\end{multline}
where the scalar-valued functions $\widehat\alpha_i$ are such that
\begin{equation}\label{coeffs RE}
    \widehat\alpha_i(d,\theta,\xi)=\widehat\beta_i\bigl(\theta,\tr d,\tr d^2,\tr d^3,\tr \xi,\tr \xi^2,\tr \xi^3,\tr(\xi d),\tr(\xi d^2),\tr(\xi^2 d),\tr(\xi^2 d^2)\bigr),
\end{equation}
and the functions $\widehat\beta_i\colon \R^*_+\times\R^{10}\to\R$ are arbitrary (in our case, we have $\tr d=\tr\xi=0$). The constitutive law~\eqref{stressMcM} proposed by MacMillan for the stress tensor is thus objective as soon as
\begin{equation}\label{isotropie 0}
    \widehat\sigma_0(\theta,\xi)=\sum_{j=0}^2\widehat\gamma_j(\theta,\tr(\xi^2),\tr(\xi^3))\xi^j,
\end{equation}
by taking $d=0$, and
\begin{multline}\label{isotropie 1}
    \widehat\sigma_1(\theta,\xi)[d]=\tr(\xi d)\sum_{j=0}^2\widehat\gamma'_j(\theta,\tr(\xi^2),\tr(\xi^3))\xi^j
    +\tr(\xi^2 d)\sum_{j=0}^2\widehat\gamma''_j(\theta,\tr(\xi^2),\tr(\xi^3))\xi^j\\
    +\sum_{j=0}^2\widehat\gamma'''_j(\theta,\tr(\xi^2),\tr(\xi^3))(\xi^jd+d\xi^j),
\end{multline}
for some arbitrary scalar-valued functions $\widehat\gamma_j$, $\widehat\gamma'_j$, $\widehat\gamma''_j$ and $\widehat\gamma'''_j$. Formula \eqref{isotropie 1}, which is clearly sufficient, can be shown to hold necessarily under mild boundedness or regularity assumptions on some of the functions $\widehat\beta_i$. Proving its necessity in all generality would probably require revisiting the original Rivlin-Ericksen argument.

Finally, the objectivity of $\xi$ expressed by \eqref{invariance xi} requires the introduction of objective derivatives in the flow rule \eqref{edoxi}.
Let us recall a few well-known facts about objective derivatives.

An objective derivative $\bigpenta$ is a differential operator that is of first order in time and respects frame-indifference,
\begin{equation*}
    \overset{\pentagon^*}{\xi^{\smash{*}}}(x^*,t)=R(t)\overset{\penta}{\xi}(x,t)R(t)^T,
\end{equation*}
for all functions $\xi$ and $R$ with values in $\Sym_3$ and $\SO(3)$  respectively, where $\xi^*$ and $\xi$ are related via \eqref{invariance xi}.

There is a general description of objective derivatives of the form
\def\Ob{\mathrm{Ob}}
\begin{equation}\label{forme a priori}
    \overset{\penta}{\xi}=\dot \xi+\Ob(h,\theta,\xi),
\end{equation}
with $\Ob\colon\M_3\times\R_+^*\times \Sym_3\to\Sym_3$ that is given in \cite{Gurtinandco} (albeit without the temperature, which is an objective scalar). 

\begin{proposition}\label{description des derivees objectives}
An operator of the form \eqref{forme a priori} is an objective derivative if and only if
\begin{equation}\label{forme a posteriori}
    \Ob(h,\theta,\xi)=\xi w- w\xi -\widehat k_s(d,\theta,\xi).
\end{equation}
where 
$\widehat k_s\colon \Sym_3\times\R_+^*\times \Sym_3\to\Sym_3$ is an objective function.
\end{proposition}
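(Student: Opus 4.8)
The plan is to characterize the objective derivatives of the form \eqref{forme a priori} by imposing the frame-indifference condition and extracting the constraint on $\Ob$. First I would write down how $\dot\xi$ transforms under a superimposed rigid motion $(x^*,t)=(R(t)x+z(t),t)$. Starting from $\xi^*(x^*,t)=R(t)\xi(x,t)R(t)^T$ and applying the material derivative (which commutes with the change of frame up to the explicit time-dependence of $R$), one gets
$$
    \dot\xi^*(x^*,t)=R\dot\xi R^T+\dot R\xi R^T+R\xi\dot R^T,
$$
where $\dot R=\dot R(t)$. Introducing the spin of the frame $\Omega(t)=\dot R(t)R(t)^T\in\Skew_3$, this becomes $\dot\xi^*=R\dot\xi R^T+\Omega\xi^*{}-\xi^*\Omega$ (after reindexing $\xi^*$ back), or equivalently $\dot\xi^*(x^*,t)=R\bigl(\dot\xi\bigr)R^T+\Omega R\xi R^T-R\xi R^T\Omega$. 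The second ingredient is the transformation rule for $h=\nx v$: a classical computation gives $h^*=RhR^T+\Omega$, hence $d^*=RdR^T$ (so $d$ is objective) and $w^*=RwR^T+\Omega$ (so $w$ is not).

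Next I would plug these into the requirement that $\overset{\penta}{\xi}=\dot\xi+\Ob(h,\theta,\xi)$ be objective, i.e. $\dot\xi^*+\Ob(h^*,\theta,\xi^*)=R(\dot\xi+\Ob(h,\theta,\xi))R^T$. Subtracting the transformation law for $\dot\xi^*$ derived above, the $\dot\xi$ terms cancel and one is left with
$$
    \Ob(h^*,\theta,\xi^*)+\Omega\xi^*-\xi^*\Omega=R\,\Ob(h,\theta,\xi)\,R^T.
$$
The term $\Omega\xi^*-\xi^*\Omega$ is not objective, so $\Ob$ cannot itself be objective; instead I would look for the objective combination. Using $w^*=RwR^T+\Omega$, we have $w^*\xi^*-\xi^*w^*=R(w\xi-\xi w)R^T+\Omega\xi^*-\xi^*\Omega$. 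Therefore, defining $\widehat k_s(d,\theta,\xi):=w\xi-\xi w-\Ob(h,\theta,\xi)$ — wait, more carefully, define $\widehat k_s$ by $\Ob(h,\theta,\xi)=\xi w-w\xi-\widehat k_s(h,\theta,\xi)$ as an ansatz — the displayed identity is equivalent to $\widehat k_s(h^*,\theta,\xi^*)=R\,\widehat k_s(h,\theta,\xi)\,R^T$, i.e. $\widehat k_s$ is objective. Finally, an objective $\Sym_3$-valued function of $(h,\theta,\xi)$ can depend on $h$ only through $d$ (this is the standard argument: choosing the frame spin $\Omega=-w$ at a point shows the dependence on $w$ is trivial, exactly as for the stress tensor in the passage preceding \eqref{stress R-E}), so we may write $\widehat k_s=\widehat k_s(d,\theta,\xi)$. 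This gives \eqref{forme a posteriori} in the forward direction; the converse is immediate since reversing the computation shows that $\dot\xi+\xi w-w\xi-\widehat k_s(d,\theta,\xi)$ transforms objectively whenever $\widehat k_s$ does.

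The main obstacle is bookkeeping the non-objective correction terms correctly: one must be careful that $\Omega\xi^*-\xi^*\Omega$ produced by differentiating the transformation law of $\xi$ is exactly cancelled by the non-objective part $\Omega\xi^*-\xi^*\Omega$ hidden inside $w^*\xi^*-\xi^*w^*$, with the right signs — an off-by-a-sign here would put the correction as $w\xi-\xi w$ rather than $\xi w-w\xi$. A secondary point requiring a small argument is the reduction of the $h$-dependence of the objective function $\widehat k_s$ to a $d$-dependence; this is entirely parallel to the reduction used for $\widehat\sigma$ above and can be invoked as such. Everything else — the transformation rules for $h$, $d$, $w$ and for material derivatives of objective tensors — is classical and can be quoted from \cite{Gurtinandco}.
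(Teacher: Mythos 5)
Your proof is correct, and the signs all check out: the term $\Omega\xi^*-\xi^*\Omega$ coming from differentiating $\xi^*=R\xi R^T$ is exactly absorbed by the non-objective part of $\xi^* w^*-w^*\xi^*$, yielding \eqref{forme a posteriori}, and the reduction of the $h$-dependence of $\widehat k_s$ to a $d$-dependence by choosing a frame with spin $\Omega=-w$ is the standard argument. Note that the paper does not prove this proposition at all — it is quoted from \cite{Gurtinandco} — so your writeup simply supplies, correctly, the classical derivation that the paper only cites.
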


In MacMillan's case, the differential equation \eqref{edo ordre} can be rewritten as
$$
    \overset{\squareabove}{\xi}=\widehat k_0(\theta,\xi)+\widehat k_1(\theta,\xi)[d],
$$
where 
$$
    \overset{\squareabove}{\xi}=\dot\xi+\xi w-w\xi
$$
is the Zaremba-Jaumann \cite{Zaremba} or corotational derivative of $\xi$ (with $\widehat k_s=0$), see also \cite{MauginDrouot} in a related context. Now, this flow rule can also be rewritten as 
\begin{equation}\label{flowruleMcM reecrite 1}
    \overset{\penta}{\xi}=0,
\end{equation}
where
\begin{equation}\label{flowruleMcM reecrite 2}
    \overset{\penta}{\xi}=\dot \xi+ \xi w-w\xi -\widehat k_0(\theta,\xi)-\widehat k_1(\theta,\xi)[d],
\end{equation}
where $\widehat k_0$ and $\widehat k_1$ are objective of the general form \eqref{isotropie 0} and \eqref{isotropie 1} respectively. In view of Proposition~\ref{description des derivees objectives}, this is an objective derivative, with $\widehat k_s(d,\theta,\xi)=\widehat k_0(\theta,\xi)+\widehat k_1(\theta,\xi)[d]$. Therefore, this flow rule generates a frame-indifferent internal variable $\xi$.
There is thus nothing special in the appearance of the Zaremba-Jaumann derivative in MacMillan's flow rule, nor in \cite{MauginDrouot,Hand}. 

We can actually generalize it by adopting \eqref{flowruleMcM reecrite 1} as a general frame-indifferent flow rule, using any objective derivative $\bigpenta$ given by \eqref{forme a posteriori}. This just means that we set 
$$
    \widehat k(h,\theta,\xi)=w\xi-\xi w+\widehat k_s(d,\theta,\xi),
$$
in the initial formulation of the flow rule.

Clearly, no objective derivative is a priori better than any other in this context, and the choice of a specific $\bigpenta$ is purely a modeling choice at this stage.

We remark that the use of an objective derivative in the flow rule makes the internal dissipation constitutive law \eqref{loi dissipation interne} frame-indifferent, as it should be. We actually note a slight simplification in this  law, also noted in \cite{McM these}, which is due to objectivity.
\begin{proposition}\label{une premiere simplification}
We have 
\begin{equation}\label{plus concis}
     \widehat d_{\textnormal{int}}(h,\theta,\xi)=\widehat \sigma(d,\theta,\xi):d-\rho\frac{\partial\widehat a_m}{\partial \xi}(\theta,\xi): \widehat k_s(d,\theta,\xi).
\end{equation}
\end{proposition}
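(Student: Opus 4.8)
The plan is to start from the constitutive law for the internal dissipation established in Proposition~\ref{restrictions thermo avec variables internes}, namely
\begin{equation*}
    \widehat d_{\textnormal{int}}(h,\theta,\xi)=\widehat\sigma(h,\theta,\xi):h-\rho\frac{\partial\widehat a_m}{\partial\xi}(\theta,\xi)\cdot\widehat k(h,\theta,\xi),
\end{equation*}
and to plug in the specific structure of MacMillan's model. First I would treat the stress term: since $\widehat\sigma$ is symmetric we already have $\widehat\sigma:h=\widehat\sigma:d$, and by objectivity (formula~\eqref{stressMcM} together with~\eqref{isotropie 0}--\eqref{isotropie 1}, or simply the fact that $\widehat\sigma$ depends on $h$ only through $d$) this equals $\widehat\sigma(d,\theta,\xi):d$. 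Next I would expand $\widehat k$ using~\eqref{flowruleMcM}, i.e. $\widehat k(h,\theta,\xi)=w\xi-\xi w+\widehat k_s(d,\theta,\xi)$ with $\widehat k_s=\widehat k_0+\widehat k_1[d]$ (or, in the generalized formulation, with $\widehat k_s$ any objective function associated with the chosen objective derivative). The product with $\frac{\partial\widehat a_m}{\partial\xi}$ then splits into two pieces.

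The key step is to show that the spin contribution vanishes, that is
\begin{equation*}
    \frac{\partial\widehat a_m}{\partial\xi}(\theta,\xi):(w\xi-\xi w)=0
\end{equation*}
for every $w\in\Skew_3$ and every $\xi\in V$. I expect this to be the heart of the argument, and it is where objectivity of $\widehat a_m$ is used. The cleanest route is to differentiate the invariance identity $\widehat a_m(\theta,R\xi R^T)=\widehat a_m(\theta,\xi)$ from~\eqref{invariance Helmoltz} along a curve $R(t)=\exp(tW)$ with $W\in\Skew_3$, $R(0)=I$; the chain rule at $t=0$ gives $\frac{\partial\widehat a_m}{\partial\xi}(\theta,\xi):(W\xi-\xi W)=0$, which is exactly what is needed. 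Alternatively, one can use the explicit representation~\eqref{forme de da/dxi}: $\frac{\partial\widehat a_m}{\partial\xi}$ is a linear combination of $\xi$ and $\dev(\xi^2)$, hence a polynomial in $\xi$ commuting with $\xi$, and for any matrix $A$ commuting with $\xi$ one has $A:(w\xi-\xi w)=\tr\bigl(A(w\xi-\xi w)\bigr)=\tr(Aw\xi)-\tr(A\xi w)=\tr(\xi Aw)-\tr(A\xi w)=0$ by cyclicity of the trace and $A\xi=\xi A$. Either way the term drops out.

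Combining the two reductions yields
\begin{equation*}
    \widehat d_{\textnormal{int}}(h,\theta,\xi)=\widehat\sigma(d,\theta,\xi):d-\rho\frac{\partial\widehat a_m}{\partial\xi}(\theta,\xi):\widehat k_s(d,\theta,\xi),
\end{equation*}
which is~\eqref{plus concis}. A minor point to handle with care is that the gradient $\frac{\partial\widehat a_m}{\partial\xi}$ is taken in $V=\Sl(3)\cap\Sym_3$ with the Frobenius inner product, whereas $w\xi-\xi w$ is a priori only in $\Skew_3^\perp\cap$ (traceless), but in fact $w\xi-\xi w\in\Sym_3$ when $\xi\in\Sym_3$ and $w\in\Skew_3$, and its trace vanishes, so it does lie in $V$ and the pairing above is legitimate; the same remark makes $\widehat k_s$ a genuine element of $V$ so that the final inner product is well-defined. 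No real obstacle is expected beyond correctly invoking objectivity of $\widehat a_m$ for the vanishing of the spin term.
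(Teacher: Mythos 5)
Your proposal is correct and follows essentially the same route as the paper: substitute the flow rule $\widehat k=w\xi-\xi w+\widehat k_s$ into \eqref{loi dissipation interne} and show that the spin contribution $\frac{\partial\widehat a_m}{\partial \xi}(\theta,\xi):(w\xi-\xi w)$ vanishes, which the paper does exactly via your second argument (the representation \eqref{forme de da/dxi} shows $\frac{\partial\widehat a_m}{\partial \xi}$ commutes with $\xi$, and cyclicity of the trace finishes it). Your alternative derivation of the same identity by differentiating the invariance $\widehat a_m(\theta,R\xi R^T)=\widehat a_m(\theta,\xi)$ along $R(t)=\exp(tW)$ is equally valid and slightly more intrinsic, but it does not change the substance of the proof.
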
  
\begin{proof}
It follows from \eqref{forme de da/dxi}, 
that
$\frac{\partial\widehat a_m}{\partial\xi}(\theta,\xi)$ commutes with $\xi$. Therefore, for all $w$,
\begin{align*}
    \frac{\partial\widehat a_m}{\partial\xi}(\theta,\xi):\bigl(w\xi-\xi w)&=
    \tr\Bigl(\frac{\partial\widehat a_m}{\partial\xi}(\theta,\xi)w\xi\Bigr)-\tr\Bigl(\frac{\partial\widehat a_m}{\partial\xi}(\theta,\xi)\xi w\Bigr)\\
    &=
    \tr\Bigl(\frac{\partial\widehat a_m}{\partial\xi}(\theta,\xi)w\xi\Bigr)-\tr\Bigl(\xi\frac{\partial\widehat a_m}{\partial\xi}(\theta,\xi) w\Bigr)=0,
\end{align*}
since $\tr(AB)=\tr(BA)$ for all $A,B$, from which the result follows.
\end{proof}

In other words, the Zaremba-Jaumann part of the objective derivative flow rule does not play any role in the dissipation.

The above discussion does not take into account the fact that $\xi$ must be traceless.
In the next section, we discuss the objective derivatives that preserve the zero trace condition in all generality.
   
\section{Traceless objective derivatives}\label{section traceless}
We  say that an objective derivative $\bigpenta$ preserves the zero trace condition, or is a traceless objective derivative, if given an initial value $\xi_0$ with values in $V$, the corresponding solution $\xi$ of the differential equation $\overset{\penta}{\xi}=0$ still has values in $V$ for any traceless velocity gradient $h$. In this respect, we have the following proposition.
\begin{proposition}\label{lequelles sont traceless}
An objective derivative of the form \eqref{forme a priori}, \eqref{forme a posteriori} conserves the zero trace condition if and only if 
$\tr\bigl(\widehat k_s(d,\theta,\xi)\bigr)=0$ for all $(d,\xi)\in V^2$, $\theta\in\R_+^*$.
\end{proposition}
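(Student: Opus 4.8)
The statement is an if-and-only-if about when the solution $\xi$ of $\overset{\penta}{\xi}=0$, i.e.\ $\dot\xi=w\xi-\xi w+\widehat k_s(d,\theta,\xi)$, stays traceless. The natural object to control is $\tr\xi(x,t)$ along the flow, so the first step is to compute its material derivative. Since $\tr$ is linear, $\frac{d}{dt}\tr\xi=\tr\dot\xi=\tr(w\xi-\xi w)+\tr\bigl(\widehat k_s(d,\theta,\xi)\bigr)$. The commutator term vanishes identically because $\tr(w\xi)=\tr(\xi w)$ (cyclicity of the trace), so we are left with the scalar ODE $\frac{d}{dt}\tr\xi(x,t)=\tr\bigl(\widehat k_s(d(x,t),\theta(x,t),\xi(x,t))\bigr)$, with initial value $\tr\xi_0=0$.

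For the ``if'' direction, suppose $\tr\bigl(\widehat k_s(d,\theta,\xi)\bigr)=0$ for all $(d,\xi)\in V^2$ and $\theta\in\R_+^*$. One must be slightly careful: a priori the solution $\xi$ is $\Sym_3$-valued, not yet known to be in $V$, so the hypothesis does not immediately apply to $\xi(x,t)$. The clean way is a uniqueness argument: fix a trajectory $x\mapsto x(t)$ of the flow of $v$, write $\zeta(t)=\xi(x(t),t)\in\Sym_3$, and observe that $\zeta$ solves $\dot\zeta=w\zeta-\zeta w+\widehat k_s(d,\theta,\zeta)$ with $\zeta(0)\in V$. Restricting the right-hand side to $V$ (which is legitimate provided $\widehat k_s$ maps $V^2\times\R_+^*$ into $V$ under the hypothesis — indeed the hypothesis is exactly that $\widehat k_s(d,\theta,\xi)$ is traceless, and it is symmetric by objectivity, hence in $V$, while $w\zeta-\zeta w$ is symmetric and traceless when $\zeta\in V$), the equation becomes an ODE on the vector space $V$ with a locally Lipschitz right-hand side. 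By Picard--Lindelöf it has a unique solution, which therefore coincides with $\zeta$; hence $\zeta(t)\in V$ for all $t$, i.e.\ $\xi$ remains traceless.

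For the ``only if'' direction, argue by contraposition: suppose there exist $\bar d,\bar\xi\in V$ and $\bar\theta\in\R_+^*$ with $\tr\bigl(\widehat k_s(\bar d,\bar\theta,\bar\xi)\bigr)=\mu\ne0$. I would construct a concrete evolution realizing these values at one point at one instant — exactly in the spirit of the Coleman--Noll constructions used in the proof of Proposition \ref{restrictions thermo avec variables internes}: pick $\bar x$, set $v(x,t)=\bar h(x-\bar x)$ with $\bar h\in\Sl(3)$ chosen so that $\sym(\bar h)=\bar d$ (possible since $\bar d$ is symmetric and traceless), choose $\theta$ with $\theta(\bar x,0)=\bar\theta$, and take the solution $\xi$ of the flow rule with constant initial datum $\xi_0\equiv\bar\xi\in V$. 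Then $d(\bar x,0)=\bar d$, $\xi(\bar x,0)=\bar\xi$, and by the scalar ODE above, $\frac{d}{dt}\tr\xi(\bar x,t)\big|_{t=0}=\tr\bigl(\widehat k_s(\bar d,\bar\theta,\bar\xi)\bigr)=\mu\ne0$; since $\tr\xi(\bar x,0)=0$, the function $t\mapsto\tr\xi(\bar x,t)$ is not identically zero for small $t$, so $\xi$ leaves $V$. This contradicts the trace-preserving property, completing the proof.

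The only mildly delicate point — the ``main obstacle'', such as it is — is the bookkeeping in the ``if'' direction: one cannot simply substitute $\xi(x,t)$ into the hypothesis, because membership in $V$ is the very thing to be proved, so the restriction-plus-uniqueness argument (or, equivalently, a Grönwall estimate on $\tr\xi$ using a Lipschitz bound, but that is messier because $\widehat k_s$ is only assumed traceless on $V^2$) is needed to close the loop cleanly.
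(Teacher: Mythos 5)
Your proof is correct and follows essentially the same route as the paper: taking the trace of the flow rule (the commutator term vanishing by cyclicity), realizing arbitrary $(\bar d,\bar\theta,\bar\xi)$ via a Coleman--Noll-type evolution for necessity, and viewing the flow rule as an ODE with $V$-valued right-hand side, preserved by uniqueness, for sufficiency. Your extra care in the ``if'' direction about why the hypothesis (stated only on $V^2$) may be applied to the a priori $\Sym_3$-valued solution is a welcome explicit justification of a step the paper treats more tersely.
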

\begin{proof}
Let $\bigpenta$ be an objective derivative and assume that, if we are given an initial condition $\xi_0\colon\omega\to V$, the flow rule \eqref{flowruleMcM reecrite 1} produces a traceless $\xi(x,t)$ for any evolution $(v(x,t),\theta(x,t))$.  Obviously, we then have $\tr(\dot\xi)=0$ and $\tr(\xi w-w\xi)=\tr(\xi w)-\tr(w\xi)=0$. Consequently,
$$
    0=\tr\Bigl(\overset{\penta}{\xi}\Bigr)=-\tr\bigl(\widehat k_s(d,\theta,\xi)\bigr),
$$
for all $(x,t)$. Let us take an arbitrary $\bar\xi\in V$ and define $\xi_0(x)=\bar\xi$. Choosing $\bar x\in\omega$, we can also adjust $d(\bar x,0)=\bar d$ for any $\bar d\in V$ and $\theta(\bar x,0)=\bar\theta$ for any $\bar\theta_0\in \R_+^*$, as we did before in the proof of Proposition~\ref{restrictions thermo avec variables internes}, therefore $\tr\bigl(\widehat k_s(\bar d,\bar\theta,\bar\xi)\bigr)=0$ for all $(\bar d,\bar\theta,\bar\xi)\in V\times\R_+^*\times V$.

Conversely, assuming $\tr\bigl(\widehat k_s(d,\theta,\xi)\bigr)=0$ and writing the flow rule as $\dot \xi=k(x,t,\xi)$ with $k(x,t,\xi)=w(x,t)\xi-\xi w(x,t)+\widehat k_s(d(x,t),\theta(x,t),\xi)$, we see that we are dealing with an ordinary differential equation, the right-hand side of which has values in the vector space $V$. Assuming an initial value in $V$, then $\xi(x,t)\in V$ for all $(x,t)$.

\end{proof}

For the flow rule \eqref{flowruleMcM reecrite 1}--\eqref{flowruleMcM reecrite 2}, MacMillan assumes $\widehat k_0$ and $\widehat k_1$ to map into $V$. According to Proposition~\ref{lequelles sont traceless}, an initial $\xi_0$ in $V$ will thus evolve in $V$ as expected.

To represent all traceless objective derivatives of the form \eqref{forme a priori}, it is obviously enough to take any objective function 
$\widehat k_s$ as in \eqref{stress R-E}, and project it on $V$ by replacing it with $\widehat k_s-\frac{\tr(\widehat k_s)}{3}I=\dev(\widehat k_s)$, which is also objective. 

Let us see what this gives for a few common objective derivatives.
\begin{itemize}
\item The Zaremba-Jaumann derivative with $\widehat k_s=0$ is already traceless without modification. 
\item We have a traceless Oldroyd A derivative
$\overset{\vartriangle_0}{\xi}=\dot \xi+h^T\!\xi+\xi h-\frac23\tr(\xi d)I$ 
and a traceless Oldroyd B derivative
$\overset{\triangledown_{\!0}}{\xi}=\dot \xi-h\xi-\xi h^T+\frac23\tr(\xi d)I$. The original A and B cases, $\overset{\vartriangle}{\xi}$ and $\overset{\triangledown}{\xi}$, neither one of which is traceless, were introduced in \cite{Oldroyd}.
\end{itemize}

More importantly perhaps, following are representation formulas for the $\widehat k_0$ and $\widehat k_1$ functions appearing in MacMillan's flow rule. These are just obtained by projecting \eqref{isotropie 0} and \eqref{isotropie 1} on $V$ as explained above, which gives
\begin{equation}\label{isotropie 2}
    \widehat k_0(\theta,\xi)=\sum_{j=1}^2\widehat\delta_j\bigl(\theta,\tr(\xi^2),\tr(\xi^3)\bigr)\dev(\xi^j),
\end{equation}
and
\begin{multline}\label{isotropie 3}
    \widehat k_1(\theta,\xi)[d]=\tr(\xi d)\sum_{j=1}^2\widehat\delta'_j\bigl(\theta,\tr(\xi^2),\tr(\xi^3)\bigr)\dev(\xi^j)
    +\tr(\xi^2 d)\sum_{j=1}^2\widehat\delta''_j\bigl(\theta,\tr(\xi^2),\tr(\xi^3)\bigr)\dev(\xi^j)\\
    +\sum_{j=0}^2\widehat\delta'''_j\bigl(\theta,\tr(\xi^2),\tr(\xi^3)\bigr)\dev(\xi^jd+d\xi^j),
\end{multline}
keeping in mind that $\dev(I)=0$, $\dev(\xi)=\xi$ and $\dev(d)=d$, for some arbitrary scalar-valued functions $\widehat\delta_j$, $\widehat\delta'_j$, $\widehat\delta''_j$ and $\widehat\delta'''_j$, thus yielding the aforementioned traceless objective derivatives. 

There are related considerations in \cite{Hand}. 

\section{Compatibility of MacMillan's model with the second law of thermodynamics}\label{section McM compatible}
In \cite{McM these}, MacMillan presents a rather cursory thermodynamical analysis of his internal variable model. He gives two, not entirely explicit, 
 necessary conditions for the second law to hold. In this section, we go beyond this both in terms of generality and of sufficiency. We also work out a few simple examples. 

We are in a case in which the Clausius-Planck inequalities are relevant. The thermal Clausius-Planck inequality is fairly easy to deal with. Indeed, it is clearly necessary and sufficient that $\widehat \kappa\ge 0$, where $\widehat\kappa$ is the function appearing in the nonlinear Fourier law \eqref{representation q}.

Let us then turn to the mechanical/internal Clausius-Planck inequality,
\begin{equation}\label{CP mech const}
    \widehat d_{\textnormal{int}}(h,\theta,\xi)\ge0,
\end{equation}
for all $(h,\theta,\xi)\in\M_3\times\R_+^*\times V$. For brevity, in computations dealing with this inequality, we will not write the argument $\theta$, which only plays the role of a parameter.

It is very well-known, and quite clear from expression \eqref{plus concis}, that choosing a free energy and a flow rule  independently of each other cannot produce a model that satisfies the second law.
Since the literature is already very large concerning liquid crystal bulk free energies, we thus opt to adapt the flow rules to the already accepted free energies, which are supposed to capture part of the physics behind liquid crystals. 

 For future reference, we will use the notation $\tau_2$ as a placeholder for $\tr(\xi^2)$ and $\tau_3$ for $\tr(\xi^3)$. We note that the variables $\tau_2$ and $\tau_3$ are constrained as follows.
\begin{lemma}\label{domaine}
Let $\mathcal{D}=\bigl\{(\tr(\xi^2),\tr(\xi^3));\xi\in V\bigr\}$.
We have  $\mathcal{D}=\bigl\{(\tau_2,\tau_3)\in \R_+\times\R;|\tau_3|\le\frac{1}{\sqrt6}\tau_2^{3/2}\bigr\}$.
\end{lemma}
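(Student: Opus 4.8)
Here is my plan for proving Lemma~\ref{domaine}.

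\textbf{Setup and strategy.} Since $\xi\in V=\Sl(3)\cap\Sym_3$ is symmetric and traceless, it has three real eigenvalues $\lambda_1,\lambda_2,\lambda_3$ with $\lambda_1+\lambda_2+\lambda_3=0$, and the quantities $\tr(\xi^2)=\lambda_1^2+\lambda_2^2+\lambda_3^2$ and $\tr(\xi^3)=\lambda_1^3+\lambda_2^3+\lambda_3^3$ are the relevant symmetric functions. Conversely, every triple of reals summing to zero is realized by some $\xi\in V$ (take $\xi$ diagonal), so the problem reduces to determining the image of the map $(\lambda_1,\lambda_2,\lambda_3)\mapsto(\tau_2,\tau_3)$ on the plane $\{\lambda_1+\lambda_2+\lambda_3=0\}$. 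The plan is to parametrize this plane by polar-type coordinates: since $\tau_2\ge 0$ is a squared norm, write $\lambda_i=\sqrt{\tau_2}\,u_i$ where $(u_1,u_2,u_3)$ lies on the circle $\sum u_i=0$, $\sum u_i^2=1$, and then track how $\tau_3=\tau_2^{3/2}\sum u_i^3$ depends on the angular variable.

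\textbf{Key computation.} On the unit circle in the plane $\sum u_i=0$, I would introduce an angle $\phi$ via, say, $u_1=\sqrt{2/3}\cos\phi$, $u_2=\sqrt{2/3}\cos(\phi-2\pi/3)$, $u_3=\sqrt{2/3}\cos(\phi+2\pi/3)$, which indeed satisfies $\sum u_i=0$ and $\sum u_i^2=1$. A standard trigonometric identity gives $\sum u_i^3=(2/3)^{3/2}\sum\cos^3(\phi+2k\pi/3)=(2/3)^{3/2}\cdot\frac{3}{4}\cos(3\phi)=\frac{1}{\sqrt6}\cos(3\phi)$. Hence $\tau_3=\frac{1}{\sqrt6}\tau_2^{3/2}\cos(3\phi)$, and as $\phi$ ranges over $\R$ the factor $\cos(3\phi)$ takes every value in $[-1,1]$. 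Therefore, for fixed $\tau_2>0$, the attainable values of $\tau_3$ are exactly the interval $[-\frac{1}{\sqrt6}\tau_2^{3/2},\frac{1}{\sqrt6}\tau_2^{3/2}]$; for $\tau_2=0$ only $\tau_3=0$ is attained, consistent with the inequality. This proves both inclusions: every pair arising from some $\xi$ satisfies $|\tau_3|\le\frac{1}{\sqrt6}\tau_2^{3/2}$, and conversely every such pair is realized (pick $\tau_2$, solve $\cos(3\phi)=\sqrt6\,\tau_3/\tau_2^{3/2}$ for $\phi$, and take the corresponding diagonal $\xi$).

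\textbf{Main obstacle.} There is no deep obstacle here; the only thing requiring a little care is verifying the trigonometric identity $\sum_{k=0}^{2}\cos^3(\phi+2k\pi/3)=\frac34\cos(3\phi)$ — this follows from $\cos^3\alpha=\frac14(3\cos\alpha+\cos 3\alpha)$ together with the fact that $\sum_{k=0}^2\cos(\phi+2k\pi/3)=0$ and $\sum_{k=0}^2\cos(3\phi+2k\pi)=3\cos(3\phi)$. An alternative, fully algebraic route avoiding trigonometry would be to eliminate the $\lambda_i$ directly: with $p:=\lambda_1\lambda_2+\lambda_2\lambda_3+\lambda_3\lambda_1$ and $e_3:=\lambda_1\lambda_2\lambda_3$, one has $\tau_2=-2p$ and $\tau_3=3e_3$ (using $\sum\lambda_i=0$), and the constraint that the cubic $t^3+pt-e_3$ have three real roots is exactly that its discriminant $-4p^3-27e_3^2\ge 0$, i.e. $27 e_3^2\le -4p^3=\frac{1}{2}\tau_2^3$, which rearranges to $\tau_3^2\le\frac{1}{6}\tau_2^3$. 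Either way, the statement follows; I would present whichever is shorter, likely the discriminant argument since it makes the realizability direction transparent as well.
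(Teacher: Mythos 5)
Your proposal is correct, and the discriminant argument you say you would actually present (three real roots of $t^3+pt-e_3$ iff $-4p^3-27e_3^2\ge 0$, with $\tau_2=-2p$, $\tau_3=3e_3$) is exactly the paper's proof, including the converse via a diagonal $\xi$. The trigonometric parametrization is a valid alternative but unnecessary here.
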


\begin{proof}
Let $P(X)=X^3+\tr(\cof\xi)X-\det\xi$ be the characteristic polynomial of $\xi$. We note that, since $\xi\in V$, $\tr(\cof\xi)=-\frac12\tau_2$ and $\det\xi=\frac13\tau_3$. Since $\xi$ is symmetric, all the roots of its characteristic polynomial are real, hence its discriminant is nonnegative, \emph{i.e.},
$$
    \frac12\tau_2^3-3\tau_3^2\ge 0.
$$
Since $\tau_2\ge0$, it follows that $(\tau_2,\tau_3)\in\mathcal{D}$. Conversely, given $(\tau_2,\tau_3)\in\mathcal{D}$, then $P$ has real roots $(\lambda_1,\lambda_2,\lambda_3)\in\R^3$, and the matrix $\xi=\diag(\lambda_1,\lambda_2,\lambda_3)$ belongs to $V$ and is such that $\tau_2=\tr(\xi^2)$ and $\tau_3=\tr(\xi^3)$. 
\end{proof}

Let us consider MacMillan's 
constitutive assumptions \eqref{stressMcM} and \eqref{flowruleMcM}. In this case, and using Proposition~\ref{une premiere simplification}, inequality \eqref{CP mech const} becomes
\begin{equation}\label{CP McM}
    \bigl(\widehat\sigma_0(\theta,\xi)+\widehat\sigma_1(\theta,\xi)[d]\bigr):d-\rho\frac{\partial\widehat a_m}{\partial\xi}(\theta,\xi):\bigl(\widehat k_0(\theta,\xi)+\widehat k_1(\theta,\xi)[d]\bigr)\ge0.
\end{equation}

\begin{proposition}\label{les conditions pour McM}
A set of necessary and sufficient conditions for MacMillan's internal variable model to satisfy the second law of thermodynamics is 
\begin{gather}
    \frac{\partial\widehat a_m}{\partial\xi}(\theta,\xi):\widehat k_0(\theta,\xi)\le0,
    \label{condition 1}\\
    \bigl(\widehat\sigma_1(\theta,\xi)[\bar d\,]\bigr):\bar d\ge0,
    \label{condition 2}\\
    \Bigl(\widehat\sigma_0(\theta,\xi):\bar d-\rho\frac{\partial\widehat a_m}{\partial\xi}(\theta,\xi):\bigl(\widehat k_1(\theta,\xi)[\bar d\,]\bigr)\Bigr)^2\qquad\qquad\nonumber\\\qquad\qquad+4\rho\Bigl(\frac{\partial\widehat a_m}{\partial\xi}(\theta,\xi):\widehat k_0(\theta,\xi)\Bigr)\bigl(\bigl(\widehat \sigma_1(\theta,\xi)[\bar d\,]\bigr):\bar d\bigr)\le0.
    \label{condition 3}
 \end{gather}
 and 
\begin{equation}\label{condition thermique}
    \widehat \kappa(\theta,n,\tau_2,\tau_3)\ge 0,
\end{equation}
for all $\bar d\in V$, $\|\bar d\|=1$, $\theta\in \R_+^*$, $\xi\in V$,   $n\in\R_+$,  $(\tau_2,\tau_3)\in\mathcal{D}$.
\end{proposition}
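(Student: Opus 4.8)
The strategy is to reduce the mechanical/internal Clausius–Planck inequality \eqref{CP McM} to a statement about a quadratic polynomial in a real scaling parameter, and then exploit the homogeneity of the various terms in $d$. Write $d=t\bar d$ with $\bar d\in V$, $\|\bar d\|=1$, and $t\in\R$. Since $\widehat\sigma_1$ and $\widehat k_1$ are linear in $d$, while $\widehat\sigma_0$, $\widehat k_0$ and $\frac{\partial\widehat a_m}{\partial\xi}$ do not depend on $d$, substituting into \eqref{CP McM} produces, for fixed $\theta$ and $\xi$, a real polynomial of degree at most two in $t$:
\begin{equation*}
    P_{\bar d}(t)=\bigl(\bigl(\widehat\sigma_1(\theta,\xi)[\bar d\,]\bigr):\bar d\bigr)t^2+\Bigl(\widehat\sigma_0(\theta,\xi):\bar d-\rho\tfrac{\partial\widehat a_m}{\partial\xi}(\theta,\xi):\bigl(\widehat k_1(\theta,\xi)[\bar d\,]\bigr)\Bigr)t-\rho\tfrac{\partial\widehat a_m}{\partial\xi}(\theta,\xi):\widehat k_0(\theta,\xi).
\end{equation*}
The inequality \eqref{CP McM} holds for all $h\in\M_3$ (equivalently all $d\in V$, since only the symmetric part enters) if and only if $P_{\bar d}(t)\ge0$ for all $t\in\R$ and all unit $\bar d\in V$; the $\theta$-dependence is just a parameter. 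I would first record this reduction carefully, noting that the set of $d=t\bar d$ sweeps out all of $V$.

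The core of the argument is then the elementary fact that a real quadratic $At^2+Bt+C$ is nonnegative on all of $\R$ if and only if either ($A>0$ and $B^2-4AC\le0$) or ($A=0$, $B=0$ and $C\ge0$) or ($A\ge 0$, $C\ge 0$ and $B^2-4AC\le 0$) — more cleanly: iff $A\ge0$, $C\ge0$ and $B^2\le4AC$, where the case $A=0$ forces $B=0$. Applying this with $A=\bigl(\widehat\sigma_1(\theta,\xi)[\bar d\,]\bigr):\bar d$, $B=\widehat\sigma_0(\theta,\xi):\bar d-\rho\frac{\partial\widehat a_m}{\partial\xi}(\theta,\xi):\bigl(\widehat k_1(\theta,\xi)[\bar d\,]\bigr)$, and $C=-\rho\frac{\partial\widehat a_m}{\partial\xi}(\theta,\xi):\widehat k_0(\theta,\xi)$: the condition $A\ge0$ for all unit $\bar d$ is exactly \eqref{condition 2}; the condition $C\ge0$ is exactly \eqref{condition 1}; and $B^2-4AC\le0$ is exactly \eqref{condition 3} once $C$ is written out with its sign. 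I would spell out that when $A=0$ for some $\bar d$, the discriminant condition \eqref{condition 3} degenerates to $B^2\le0$, i.e. $B=0$, which is precisely what the quadratic criterion demands, so no separate case need be carved out. The thermal inequality \eqref{condition thermique} is immediate: as already observed before the proposition, the thermal Clausius–Planck inequality \eqref{CP therm} together with the Fourier representation \eqref{representation q} reads $\widehat\kappa(\theta,\|g\|,\tau_2,\tau_3)\,\|g\|^2\ge0$ for all $g$, which holds iff $\widehat\kappa\ge0$ on the relevant domain, with $\tau_2,\tau_3$ ranging over $\mathcal D$ by Lemma~\ref{domaine}. Finally, by Proposition~\ref{restrictions thermo avec variables internes} and the remark following it, the two Clausius–Planck inequalities together are equivalent to the full second law in this setting, so the conjunction of \eqref{condition 1}–\eqref{condition thermique} is necessary and sufficient.

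The one genuinely delicate point — the main obstacle — is the interaction between the quantifier "for all $\bar d$" and the discriminant: \eqref{condition 3} is required to hold for every unit $\bar d$ separately, and one must check that this per-$\bar d$ discriminant condition, together with the per-$\bar d$ conditions \eqref{condition 1}–\eqref{condition 2}, is exactly equivalent to nonnegativity of $P_{\bar d}$ for all $t$ — in particular that nothing is lost or gained by treating $A$, $B$, $C$ as functions of $\bar d$ rather than as fixed reals. Since the quadratic criterion is applied pointwise in $\bar d$ (and in $\theta$, $\xi$) and then universally quantified, this is clean, but it deserves an explicit sentence so the reader sees that no uniformity in $\bar d$ is being assumed beyond what the three displayed conditions encode. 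I would also remark that \eqref{condition 1} need only be stated without reference to $\bar d$ since $C$ is independent of $\bar d$, whereas \eqref{condition 2} and \eqref{condition 3} are genuinely $\bar d$-dependent; this is why the statement quantifies over unit $\bar d\in V$.
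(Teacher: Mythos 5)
Your proposal is correct and follows essentially the same route as the paper: substitute $d=\lambda\bar d$ with $\bar d$ a unit vector of $V$, observe that the mechanical/internal Clausius--Planck inequality becomes a quadratic polynomial in $\lambda$ with coefficients $A=\bigl(\widehat\sigma_1[\bar d\,]\bigr):\bar d$, $B$ and $C=-\rho\frac{\partial\widehat a_m}{\partial\xi}:\widehat k_0$, and apply the standard criterion for a real quadratic to be nonnegative on all of $\R$, the thermal inequality being handled separately via the Fourier representation. Your additional remarks on the degenerate case $A=0$ and on the pointwise-in-$\bar d$ quantification are correct elaborations of details the paper leaves implicit.
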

\begin{proof}
For the mechanical/internal Clausius-Planck inequality, we take $\bar d\in V$ with $\|\bar d\|=1$, and set $d=\lambda\bar d$ with $\lambda\in\R$. Then, inequality
\eqref{CP McM} becomes (omitting $\theta$),
\begin{equation}\label{CP McM bis}
    \lambda\bigl(\widehat\sigma_0(\xi)+\lambda\widehat\sigma_1(\xi)[\bar d\,]\bigr):\bar d-\rho\frac{\partial\widehat a_m}{\partial\xi}(\xi):\bigl(\widehat k_0(\xi)+\lambda\widehat k_1(\xi)[\bar d\,]\bigr)\ge0,
\end{equation}
the left-hand side of which is a polynomial of degree at most $2$ in $\lambda$. Conditions \eqref{condition 1}, \eqref{condition 2} and \eqref{condition 3} are necessary and sufficient for this polynomial to only take nonnegative values. 

We have already mentioned that, more generally, the thermal Clausius-Planck inequality is equivalent to \eqref{condition thermique}.
\end{proof}

\begin{remark}
Conditions \eqref{condition 1} and \eqref{condition 2} are the necessary conditions found by MacMillan in \cite{McM these}. Condition~\eqref{condition 3} seems to be new.
\end{remark}

We are now going to examine the conditions of Proposition \ref{les conditions pour McM} one after another,  considering the free energy as given as explained earlier.

Let us express condition \eqref{condition 1} in terms of the representation formulas \eqref{forme de da/dxi} and \eqref{isotropie 2}. We will use the following auxiliary functions stemming from our given $\widehat\psi$:
\begin{align}
    \widehat\psi_1^*(\theta,\tau_2,\tau_3)=2\tau_3\frac{\partial\widehat\psi}{\partial\tau_2}(\theta,\tau_2,\tau_3)+\frac12 \tau_2^2 \frac{\partial\widehat\psi}{\partial\tau_3}(\theta,\tau_2,\tau_3),\label{psi*un}\\
    \widehat\psi_2^*(\theta,\tau_2,\tau_3)=2\tau_2\frac{\partial\widehat\psi}{\partial\tau_2}(\theta,\tau_2,\tau_3) + 3 \tau_3 \frac{\partial\widehat\psi}{\partial\tau_3}(\theta,\tau_2,\tau_3).\label{psi*deux}
\end{align}

For brevity, we mostly omit the arguments $(\theta, \tr(\xi^2), \tr(\xi^3))$ or $(\theta, \tau_2,\tau_3)$ depending on context.
\begin{proposition}\label{premiere condition k0}
A function $\widehat k_0$ satisfies condition  \eqref{condition 1} if and only if its coefficients $\widehat\delta_1,\widehat\delta_2$ are such that
\begin{equation}\label{underwhelming}
    \widehat\psi_2^*\widehat\delta_1+\widehat\psi^*_1\widehat\delta_2\le0,
\end{equation}
for all $(\theta,\tau_2,\tau_3)$ in $\R_+^*\times \mathcal{D}$.
\end{proposition}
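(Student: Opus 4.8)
The plan is to compute the left-hand side of condition \eqref{condition 1} explicitly, using the representation formulas already at our disposal, and recognize the result as exactly $\widehat\psi^*_2\widehat\delta_1+\widehat\psi^*_1\widehat\delta_2$ evaluated at $(\theta,\tr(\xi^2),\tr(\xi^3))$; the equivalence then follows from Lemma \ref{domaine}.

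First I would write $\frac{\partial\widehat a_m}{\partial\xi}(\theta,\xi)=2\frac{\partial\widehat\psi}{\partial\tau_2}\,\xi+3\frac{\partial\widehat\psi}{\partial\tau_3}\,\dev(\xi^2)$ from \eqref{forme de da/dxi}, and, recalling that $\dev(\xi)=\xi$ on $V$, $\widehat k_0(\theta,\xi)=\widehat\delta_1\,\xi+\widehat\delta_2\,\dev(\xi^2)$ from \eqref{isotropie 2}. Expanding the Frobenius product $\frac{\partial\widehat a_m}{\partial\xi}(\theta,\xi):\widehat k_0(\theta,\xi)$ then reduces the computation to the three scalars $\xi:\xi$, $\xi:\dev(\xi^2)$ and $\dev(\xi^2):\dev(\xi^2)$. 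Since $\tr\xi=0$, the first two are immediate: $\xi:\xi=\tr(\xi^2)=\tau_2$ and $\xi:\dev(\xi^2)=\tr(\xi^3)-\tfrac13\tau_2\tr\xi=\tr(\xi^3)=\tau_3$. The third one needs a little care: expanding $\dev(\xi^2)=\xi^2-\tfrac13\tau_2 I$ gives $\dev(\xi^2):\dev(\xi^2)=\tr(\xi^4)-\tfrac13\tau_2^2$, and invoking the identity $\tr(\xi^4)=\tfrac12(\tr(\xi^2))^2$ valid on $V$ (the same identity already used in \eqref{bulk dG}; it follows from expressing the eigenvalues and using $\tr\xi=0$) yields $\dev(\xi^2):\dev(\xi^2)=\tfrac16\tau_2^2$.

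Substituting these three values and regrouping by $\widehat\delta_1$ and $\widehat\delta_2$, the coefficient of $\widehat\delta_1$ is $2\tau_2\frac{\partial\widehat\psi}{\partial\tau_2}+3\tau_3\frac{\partial\widehat\psi}{\partial\tau_3}=\widehat\psi^*_2$ and the coefficient of $\widehat\delta_2$ is $2\tau_3\frac{\partial\widehat\psi}{\partial\tau_2}+\tfrac12\tau_2^2\frac{\partial\widehat\psi}{\partial\tau_3}=\widehat\psi^*_1$, matching the definitions \eqref{psi*un}--\eqref{psi*deux} exactly. Hence $\frac{\partial\widehat a_m}{\partial\xi}(\theta,\xi):\widehat k_0(\theta,\xi)=\bigl(\widehat\psi^*_2\widehat\delta_1+\widehat\psi^*_1\widehat\delta_2\bigr)\bigl(\theta,\tr(\xi^2),\tr(\xi^3)\bigr)$. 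Finally, since this quantity depends on $\xi$ only through the pair $(\tr(\xi^2),\tr(\xi^3))$, condition \eqref{condition 1} — imposed for all $\xi\in V$ and all $\theta\in\R_+^*$ — holds if and only if $\widehat\psi^*_2\widehat\delta_1+\widehat\psi^*_1\widehat\delta_2\le 0$ on the range of $\xi\mapsto(\tr(\xi^2),\tr(\xi^3))$, which by Lemma \ref{domaine} is precisely $\R_+^*\times\mathcal{D}$, giving \eqref{underwhelming}.

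The only genuine computational subtlety — the \emph{main obstacle}, such as it is — is getting $\dev(\xi^2):\dev(\xi^2)=\tfrac16\tau_2^2$ right, which hinges on the traceless identity $\tr(\xi^4)=\tfrac12(\tr(\xi^2))^2$; everything else is bookkeeping, and the surjectivity statement from Lemma \ref{domaine} is exactly what upgrades the pointwise-in-$\xi$ inequality to the clean pointwise-in-$(\tau_2,\tau_3)$ form claimed in the proposition.
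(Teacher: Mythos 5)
Your proof is correct and is exactly the ``straightforward expansion'' the paper intends: you use \eqref{forme de da/dxi} and \eqref{isotropie 2}, the three identities $\|\xi\|^2=\tr(\xi^2)$, $\xi:\dev(\xi^2)=\tr(\xi^3)$ and $\|\dev(\xi^2)\|^2=\frac16(\tr(\xi^2))^2$, and Lemma \ref{domaine} to pass from ``for all $\xi\in V$'' to ``for all $(\tau_2,\tau_3)\in\mathcal{D}$''. The only difference is that you spell out the bookkeeping (including the traceless identity $\tr(\xi^4)=\frac12(\tr(\xi^2))^2$) that the paper leaves implicit.
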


\begin{proof}
Straightforward expansion, together with the identities $\|\xi\|^2=\tr(\xi^2)$, $\xi:\dev(\xi^2)=\tr(\xi^3)$ and $\|\dev(\xi^2)\|^2=\frac16\bigl(\tr(\xi^2)\bigr)^2$. 
\end{proof}

In spite of \eqref{underwhelming} being necessary and sufficient, it is still useful to have more readily usable conditions.
In this respect, we have the following result.

\begin{proposition}\label{une forme explicite non mais !}
Let $\widehat\psi$ be given. A function $\widehat k_0$ satisfies \eqref{condition 1} if and only if it has the form
\begin{equation}\label{forme de k0}
    \widehat k_0(\theta,\xi)=-\widehat\eta_0(\theta,\xi)\frac{\partial\widehat a_m}{\partial\xi}(\theta,\xi)+\widehat k_0^*(\theta,\xi),
\end{equation}
when $\frac{\partial\widehat a_m}{\partial\xi}(\theta,\xi)\neq0$, with 
$\widehat\eta_0$ nonnegative and objective, and
\begin{equation}\label{forme de z0}
     \widehat k_0^*(\theta,\xi)=\widehat\omega(\theta,\xi)\Bigl(\widehat\psi^*_1(\theta,\tr(\xi^2),\tr(\xi^3))\,\xi
     -\widehat\psi^*_2(\theta,\tr(\xi^2),\tr(\xi^3))\dev(\xi^2)\Bigr)\Bigr),
\end{equation}
with $\widehat\omega$ scalar-valued and objective,
when $\xi$ is not uniaxial, and
$\widehat k_0^*(\theta,\xi)=0$ when $\xi$ is uniaxial. In particular, $\frac{\partial\widehat a_m}{\partial\xi}(\theta,\xi)$ and $\widehat k_0^*(\theta,\xi)$ are orthogonal.
\end{proposition}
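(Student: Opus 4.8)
The plan is to reduce Proposition \ref{une forme explicite non mais !} to an orthogonal decomposition of the vector $\widehat k_0(\theta,\xi)$ relative to the gradient $\frac{\partial\widehat a_m}{\partial\xi}(\theta,\xi)$, and then to identify the orthogonal complement explicitly using objectivity and the representation formula \eqref{isotropie 2}. First, I would observe that by Proposition \ref{premiere condition k0} (and Proposition \ref{restrictions thermo avec variables internes}, \eqref{forme de da/dxi}) the two objective vectors $\xi$ and $\dev(\xi^2)$ span, at each $\xi$, an at-most-two-dimensional subspace $W(\xi)$ of $V$ containing $\frac{\partial\widehat a_m}{\partial\xi}(\theta,\xi)$. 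Since $\widehat k_0(\theta,\xi)$ as given by \eqref{isotropie 2} also lies in $W(\xi)$, the whole computation takes place in $W(\xi)$. I would split into two cases according to $\dim W(\xi)$: when $\xi$ is uniaxial, $\xi$ and $\dev(\xi^2)$ are collinear (this is exactly the condition that the inequality in Lemma \ref{domaine} is an equality, i.e.\ two eigenvalues coincide), so $W(\xi)$ is one-dimensional; otherwise it is two-dimensional.

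Next, in the non-uniaxial case, $W(\xi)=\R\xi\oplus\R\dev(\xi^2)$ is genuinely two-dimensional. Using the elementary inner-product identities already invoked in the proof of Proposition \ref{premiere condition k0} — namely $\|\xi\|^2=\tau_2$, $\xi:\dev(\xi^2)=\tau_3$, and $\|\dev(\xi^2)\|^2=\frac16\tau_2^2$ — I would compute the orthogonal complement, inside $W(\xi)$, of the line spanned by $\frac{\partial\widehat a_m}{\partial\xi}(\theta,\xi)=2\frac{\partial\widehat\psi}{\partial\tau_2}\xi+3\frac{\partial\widehat\psi}{\partial\tau_3}\dev(\xi^2)$. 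A direct calculation shows that the vector $\widehat\psi^*_1\,\xi-\widehat\psi^*_2\,\dev(\xi^2)$, with $\widehat\psi^*_1,\widehat\psi^*_2$ as defined in \eqref{psi*un}, \eqref{psi*deux}, is orthogonal to $\frac{\partial\widehat a_m}{\partial\xi}$ — this is exactly the check $\bigl(2\tfrac{\partial\widehat\psi}{\partial\tau_2}\xi+3\tfrac{\partial\widehat\psi}{\partial\tau_3}\dev(\xi^2)\bigr):\bigl(\widehat\psi^*_1\xi-\widehat\psi^*_2\dev(\xi^2)\bigr)=0$, which unwinds using the three identities above to a cancellation; I expect this bookkeeping to be the main (though routine) computational obstacle, since one has to be careful about the $\frac16\tau_2^2$ factor and the precise form of $\widehat\psi^*_1,\widehat\psi^*_2$. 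Granting this, any $\widehat k_0(\theta,\xi)\in W(\xi)$ decomposes uniquely as $\widehat k_0=-\widehat\eta_0\frac{\partial\widehat a_m}{\partial\xi}+\widehat k_0^*$ with $\widehat k_0^*$ a scalar multiple $\widehat\omega$ of $\widehat\psi^*_1\xi-\widehat\psi^*_2\dev(\xi^2)$, so that $\widehat k_0^*$ and $\frac{\partial\widehat a_m}{\partial\xi}$ are orthogonal; then $\frac{\partial\widehat a_m}{\partial\xi}:\widehat k_0=-\widehat\eta_0\bigl\|\frac{\partial\widehat a_m}{\partial\xi}\bigr\|^2$, so condition \eqref{condition 1} holds if and only if $\widehat\eta_0\ge0$. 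Objectivity of $\widehat\eta_0$ and $\widehat\omega$ follows because $\widehat k_0$, $\frac{\partial\widehat a_m}{\partial\xi}$, $\xi$ and $\dev(\xi^2)$ are all objective and the decomposition is canonical, so the scalar coefficients are invariant under $\xi\mapsto R\xi R^T$; equivalently one reads them off from \eqref{isotropie 2} and sees they depend only on $\theta,\tau_2,\tau_3$.

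Finally, in the uniaxial case $W(\xi)$ is the line $\R\xi$, which (when $\frac{\partial\widehat a_m}{\partial\xi}(\theta,\xi)\neq0$) coincides with $\R\frac{\partial\widehat a_m}{\partial\xi}$; hence $\widehat k_0(\theta,\xi)$ is automatically a multiple of $\frac{\partial\widehat a_m}{\partial\xi}$, forcing $\widehat k_0^*(\theta,\xi)=0$, and \eqref{condition 1} again reduces to the nonnegativity of the coefficient $\widehat\eta_0$. One should note that on the uniaxial locus one may also have $\widehat\psi^*_1\xi-\widehat\psi^*_2\dev(\xi^2)$ proportional to $\xi$ itself, so prescribing $\widehat k_0^*=0$ there is the only consistent choice and matches the continuity of $\widehat k_0$ as one approaches that locus. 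I would close by remarking that when $\frac{\partial\widehat a_m}{\partial\xi}(\theta,\xi)=0$ condition \eqref{condition 1} is vacuous and any objective $\widehat k_0$ is admissible, consistent with formula \eqref{forme de k0} being stated only where the gradient is nonzero; and that the converse direction — that every $\widehat k_0$ of the stated form satisfies \eqref{condition 1} — is immediate from the orthogonality $\frac{\partial\widehat a_m}{\partial\xi}:\widehat k_0^*=0$ and $\widehat\eta_0\ge0$, so no extra work is needed there.
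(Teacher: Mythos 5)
Your proposal is correct and follows essentially the same route as the paper: decompose $\widehat k_0$ orthogonally against $\frac{\partial\widehat a_m}{\partial\xi}$ inside $\vect(\xi,\dev(\xi^2))$, reduce \eqref{condition 1} to $\widehat\eta_0\ge0$, and identify the orthogonal line via $\widehat\psi_1^*\xi-\widehat\psi_2^*\dev(\xi^2)$, with the uniaxial case corresponding to linear dependence of $\xi$ and $\dev(\xi^2)$. Your explicit verification that $\widehat\psi_1^*\widehat\psi_2^*-\widehat\psi_2^*\widehat\psi_1^*=0$ under the inner-product identities is exactly the computation the paper leaves implicit when it calls \eqref{forme de z0} a ``standard parametric representation.''
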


\begin{proof}
When $\frac{\partial\widehat a_m}{\partial\xi}(\theta,\xi)=0$, inequality \eqref{condition 1} is satisfied irrespective of $\widehat k_0$.
Otherwise, we can always write $\widehat k_0(\theta,\xi)=-\widehat\eta_0(\theta,\xi)\frac{\partial\widehat a_m}{\partial\xi}(\theta,\xi)+\widehat k_0^*(\theta,\xi)$, with $\widehat k_0^*(\theta,\xi)$ orthogonal to $\frac{\partial\widehat a_m}{\partial\xi}(\theta,\xi)$ and $\widehat\eta_0$ some scalar objective function.  Since $\frac{\partial\widehat a_m}{\partial\xi}(\theta,\xi):\widehat k_0(\theta,\xi)=-\widehat\eta_0(\theta,\xi)\bigl\|\frac{\partial\widehat a_m}{\partial\xi}(\theta,\xi)\bigr\|^2$, condition \eqref{condition 1} just says that $\widehat \eta_0\ge0$. 

Note that by formulas \eqref{forme de da/dxi} and \eqref{isotropie 2}, both $\frac{\partial\widehat a_m}{\partial\xi}(\theta,\xi)$ and $\widehat k_0(\theta,\xi)$ belong to $\vect(\xi,\dev(\xi^2))$.  Therefore $\widehat k_0^*(\theta,\xi)$ also belongs to $\vect(\xi,\dev(\xi^2))$, a space  which is of dimension at most two. 
It is not hard 
to see that $(\xi,\dev(\xi^2))$ are linearly dependent if and only if $\xi$ is uniaxial. Therefore, it follows that $\widehat k_0^*=0$ when $\xi$ is uniaxial. When $(\xi,\dev(\xi^2))$ are linearly independent, that is to say $\xi$ is not uniaxial, then formula \eqref{forme de z0} is just a standard parametric representation of the vector line  $\frac{\partial\widehat a_m}{\partial\xi}(\theta,\xi)^\bot$ in $\vect(\xi,\dev(\xi^2))$.
\end{proof} 

Thus, the general form of $\widehat k_0$ depends on two arbitrary, scalar-valued and objective functions $\widehat \eta_0$ and $\widehat \omega$, with $\widehat\eta_0\ge0$.
In particular, a convenient sufficient condition is to just take $\widehat\omega=0$ and
\begin{equation}\label{convenient}
    \widehat k_0(\theta,\xi)=-\widehat\eta_0(\theta,\xi)\frac{\partial\widehat a_m}{\partial\xi}(\theta,\xi)\text{ with }\widehat\eta_0(\theta,\xi)\ge0 \text{ and objective.}
\end{equation}

Naturally, formula \eqref{forme de z0} is also a complicated way of writing $\widehat k_0^*=0$ when $\xi$ is uniaxial.

We next note that conditions \eqref{condition 2} and \eqref{condition 3} say that two quadratic forms on $V$ must be nonnegative for the first one, and nonpositive for the second one, for all $(\theta,\xi)$. 
We will make repeated use of the following lemma which gives two different sets of sufficient conditions for a quadratic expression to remain nonnegative.
\begin{lemma}\label{un poil moins trivial}
Let $\alpha,\beta,\gamma\in\R$ and let $\xi$ in $V$. For all $d\in V$, let $P_\xi(d)=\alpha\|d\|^2+\beta\xi d : d+\gamma\|\xi d\|^2$. 
 
 i) if $\alpha\ge 0$, $\gamma\ge 0$, and $\beta^2\le 4\alpha\gamma$, 
 
 \noindent or
 
 ii) if $\|\beta\xi+\gamma\xi^2\|=\bigl(\tau_2 \beta^2+\frac{\tau_2^2}{2}\gamma^2+2\tau_3{\,}\beta\gamma\bigr)^{\frac12}\le \alpha$, 
 
\noindent then $P_\xi(d)\ge 0$ for all $d\in V$.
 \end{lemma}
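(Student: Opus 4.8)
The plan is to treat the two sufficient conditions separately, since they come from two genuinely different ideas. For part (i), I would view $P_\xi(d)$ as a quadratic form on $V$ built from the three "atoms" $\|d\|^2$, $\xi d:d=\tr(\xi d^2)$ and $\|\xi d\|^2=\tr(\xi^2 d^2)$, and argue pointwise in $d$: for fixed $d$, the number $t=\xi d:d$ and the numbers $\|d\|^2$, $\|\xi d\|^2$ satisfy Cauchy--Schwarz $(\xi d:d)^2\le\|\xi d\|^2\|d\|^2$, i.e.\ $t^2\le \|\xi d\|^2\|d\|^2$. Then $P_\xi(d)=\alpha\|d\|^2+\beta t+\gamma\|\xi d\|^2$ and one writes this as a quadratic in the single real variable $t$ (or, more robustly, completes the square): since $\alpha\ge0$, $\gamma\ge0$, the expression $\alpha u + \beta t + \gamma s$ with $t^2\le us$, $u,s\ge0$, is nonnegative whenever $\beta^2\le 4\alpha\gamma$. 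Concretely, $\alpha u+\gamma s\ge 2\sqrt{\alpha\gamma}\sqrt{us}\ge 2\sqrt{\alpha\gamma}\,|t|\ge |\beta|\,|t|\ge -\beta t$, using the AM--GM inequality and then $2\sqrt{\alpha\gamma}\ge|\beta|$. This gives $P_\xi(d)\ge0$ directly.

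For part (ii), I would instead bound the linear and quadratic terms together in $d$ by a single Frobenius norm. Write $\beta\,\xi d:d+\gamma\,\xi^2 d:d = (\beta\xi+\gamma\xi^2)d:d$ (here I use that $\xi$, hence $\xi^2$, is symmetric so that $\|\xi d\|^2=\tr(\xi^2 d^2)=\xi^2 d:d$, and that $\xi d:d$ makes sense with $d$ symmetric). By Cauchy--Schwarz in the Frobenius inner product on matrices, $\bigl|(\beta\xi+\gamma\xi^2)d:d\bigr|\le \|(\beta\xi+\gamma\xi^2)d\|\,\|d\|\le\|\beta\xi+\gamma\xi^2\|_{\mathrm{op}}\,\|d\|^2\le\|\beta\xi+\gamma\xi^2\|\,\|d\|^2$, where the last step replaces the operator norm by the (larger) Frobenius norm. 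Hence $P_\xi(d)\ge\bigl(\alpha-\|\beta\xi+\gamma\xi^2\|\bigr)\|d\|^2\ge0$ under hypothesis (ii). It then remains to compute $\|\beta\xi+\gamma\xi^2\|^2=\beta^2\|\xi\|^2+2\beta\gamma\,\xi:\xi^2+\gamma^2\|\xi^2\|^2$ and identify $\|\xi\|^2=\tr(\xi^2)=\tau_2$, $\xi:\xi^2=\tr(\xi^3)=\tau_3$, and $\|\xi^2\|^2=\tr(\xi^4)=\frac12\tau_2^2$ on $V$ (the last identity being exactly the relation $\tr\xi^4=\frac12(\tr\xi^2)^2$ already recorded in the excerpt after \eqref{bulk dG}); this yields the displayed expression $\bigl(\tau_2\beta^2+\frac{\tau_2^2}{2}\gamma^2+2\tau_3\beta\gamma\bigr)^{1/2}$.

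The only slightly delicate points, rather than genuine obstacles, are bookkeeping ones: making sure that $\xi d:d$ and $\xi^2 d:d$ really are the correct rewrites of $\xi d:d$ and $\|\xi d\|^2$ when $d\in\Sym_3$ (this uses symmetry of $\xi$ so that $\tr(\xi d^2)$ is unambiguous and $\|\xi d\|^2=\tr(d^T\xi^T\xi d)=\tr(\xi^2 d^2)$), and, in part (ii), the harmless step of bounding the operator norm of $\beta\xi+\gamma\xi^2$ by its Frobenius norm so as to land on exactly the stated quantity. I do not expect either part to require anything beyond Cauchy--Schwarz, AM--GM, and the identity $\tr\xi^4=\frac12(\tr\xi^2)^2$ valid on traceless symmetric $3\times3$ matrices; in particular neither set of conditions is claimed to be necessary, so no converse needs to be produced.
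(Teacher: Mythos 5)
Your proof is correct, and part (ii) coincides with the paper's argument: rewrite the cross terms as $(\beta\xi+\gamma\xi^2)d:d$, apply Cauchy--Schwarz and the submultiplicativity of the Frobenius norm to get $P_\xi(d)\ge(\alpha-\|\beta\xi+\gamma\xi^2\|)\|d\|^2$, then expand $\|\beta\xi+\gamma\xi^2\|^2$ using $\tr(\xi^4)=\frac12(\tr(\xi^2))^2$ on $V$. The only divergence is in part (i): the paper completes the square, writing $\alpha P_\xi(d)=\bigl\|\alpha d+\frac\beta2\xi d\bigr\|^2+\bigl(\alpha\gamma-\frac{\beta^2}4\bigr)\|\xi d\|^2$ for $\alpha>0$ and treating $\alpha=0$ separately, whereas you chain Cauchy--Schwarz $(\xi d:d)^2\le\|\xi d\|^2\|d\|^2$ with AM--GM, $\alpha\|d\|^2+\gamma\|\xi d\|^2\ge2\sqrt{\alpha\gamma}\,\|d\|\,\|\xi d\|\ge|\beta|\,|\xi d:d|$. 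Your variant is marginally cleaner in that it handles $\alpha=0$ (where necessarily $\beta=0$) uniformly without a case split; the paper's completed square has the minor advantage of being an exact identity rather than an inequality chain, which makes the loss of positivity when $\beta^2>4\alpha\gamma$ more transparent. Both are elementary and both land exactly on the stated conditions; your bookkeeping of $\xi d:d=\tr(\xi d^2)$ and $\|\xi d\|^2=\tr(\xi^2d^2)=\xi^2d:d$ for symmetric $\xi,d$ is the same identification the paper uses.
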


\begin{proof}
Conditions i) are well-known. We write the proof for completeness. When $\alpha=0$, these conditions imply $\beta=0$, $\gamma\ge 0$ and the result follows. 
If $\alpha>0$, it suffices to write 
$$\alpha P_\xi(d)=\Bigl\|\alpha d+\frac \beta2 \xi d\Bigr\|^2+\Bigl(\alpha \gamma-\frac{\beta^2}4\Bigr)\|\xi d\|^2,$$
which is then the sum of two nonnegative terms. 

Now for condition ii), we can also write $P_\xi(d)$ as 
$$P_\xi(d)=\alpha\|d\|^2+\bigl(\beta{\,}\xi d+\gamma{\,}\xi^2d\bigr):d=\alpha\|d\|^2+\bigl(\beta{\,}\xi+\gamma{\,}\xi^2\bigr)d:d.$$
Therefore,
$$
    P_\xi(d)\ge \alpha\|d\|^2-\bigl\|\bigr(\beta{\,}\xi+\gamma{\,}\xi^2\bigr)d\|\|d\|\ge\bigl(\alpha-\|\beta\xi+\gamma\xi^2\|\bigr)\|d\|^2,
$$
since the Frobenius norm is submultiplicative, which shows that $P_\xi(d)$ remains nonnegative for all $d$ when condition ii) is satisfied.
\end{proof}

\begin{remark}  Conditions i) and ii) are not comparable. For instance, conditions i) are satisfied when $\alpha\ge 0, \gamma\ge 0, \beta=0$ while conditions ii) may be not satisfied for some $\xi$. Similarly, condition ii) may be satisfied with $\beta\neq 0$, $\gamma=0$, while conditions i) cannot hold.

In the sequel, we will apply Lemma \ref{un poil moins trivial} in cases when $\alpha$, $\beta$ and $\gamma$ are also functions of $\xi$.
\end{remark}

Let us now give explicit sufficient conditions, expressed in terms of the $\widehat \gamma$ functions that appear in the representation formula \eqref{isotropie 1} for $\widehat\sigma_1$, for condition \eqref{condition 2} to be satisfied.
\begin{proposition}\label{sufficient form 1}
If we assume that, for all arguments $(\theta, \tau_2,\tau_3)$,  $\widehat\gamma_1'\ge0,\, \widehat\gamma_2''\ge0$, $(\widehat\gamma_2'+\widehat\gamma_1'')^2\le4\widehat\gamma'_1\widehat\gamma''_2$, and that,
\begin{equation} \label {le premier premier choix}
    \widehat\gamma_0'''\ge0,\,\widehat\gamma_2'''\ge0,\,(\widehat\gamma_1''')^2\le4\widehat\gamma'''_0\widehat\gamma'''_2, 
\end{equation}
or 
\begin{equation}\label{le premier deuxième choix}
    \bigl(\tau_2 (\widehat\gamma_1''')^2+\frac{\tau_2^2}{2}(\widehat\gamma'''_2)^2+2\tau_3{\,}\widehat\gamma_1'''\widehat\gamma'''_2\bigr)^{1/2}\le \widehat\gamma_0''',
\end{equation}
then condition \eqref{condition 2} is satisfied.
\end{proposition}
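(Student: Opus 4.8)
The goal is to show that condition \eqref{condition 2}, namely $\bigl(\widehat\sigma_1(\theta,\xi)[\bar d\,]\bigr):\bar d\ge0$ for all $\bar d,\xi\in V$ and $\theta\in\R_+^*$, follows from the stated inequalities on the $\widehat\gamma$ coefficients. The plan is to substitute the explicit representation \eqref{isotropie 1} for $\widehat\sigma_1(\theta,\xi)[d]$ into the quadratic form $d\mapsto\bigl(\widehat\sigma_1(\theta,\xi)[d]\bigr):d$, expand using the identities $\xi^0 d+d\xi^0=2d$, $\tr(\xi d)=\xi:d$, and $\tr(\xi^2 d)=\xi^2:d=\dev(\xi^2):d$ (the last because $d$ is traceless), and then decompose the resulting expression as a sum of two pieces: one built from the $\widehat\gamma'_j,\widehat\gamma''_j$ coefficients (the ``$\tr(\xi d)$ and $\tr(\xi^2 d)$ times $\xi^j$'' terms) and one built from the $\widehat\gamma'''_j$ coefficients (the ``$\xi^j d+d\xi^j$'' terms). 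Each piece will be shown nonnegative separately, and their sum is then nonnegative.

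For the first piece, after expansion the relevant terms contribute (using $\xi:d=\tr(\xi d)$ and, by Cayley–Hamilton on $V$, $\xi^2:d$ lying in $\vect$ of the same quantities) a quadratic form in the two scalars $\tr(\xi d)$ and $\tr(\xi^2 d)$ with matrix $\begin{pmatrix}\widehat\gamma_1' & \tfrac12(\widehat\gamma_2'+\widehat\gamma_1'')\\ \tfrac12(\widehat\gamma_2'+\widehat\gamma_1'') & \widehat\gamma_2''\end{pmatrix}$ — here the $\widehat\gamma_0',\widehat\gamma_0''$ terms pair $\xi^0=I$ against $d$ and vanish by tracelessness of $d$, while the off-diagonal symmetrization accounts for both the $\tr(\xi d)\,\xi^2:d$ and $\tr(\xi^2 d)\,\xi:d$ cross terms. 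The hypotheses $\widehat\gamma_1'\ge0$, $\widehat\gamma_2''\ge0$, $(\widehat\gamma_2'+\widehat\gamma_1'')^2\le4\widehat\gamma_1'\widehat\gamma_2''$ are exactly the positive-semidefiniteness conditions for this $2\times2$ matrix, so this piece is $\ge0$.

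For the second piece, the sum $\sum_{j=0}^2\widehat\gamma_j'''\bigl(\xi^j d+d\xi^j\bigr):d$ equals $2\widehat\gamma_0'''\|d\|^2+2\widehat\gamma_1'''\,\xi d:d+2\widehat\gamma_2'''\,\xi^2 d:d$ (using symmetry of $d$ and $\xi$ and $\tr(AB)=\tr(BA)$ to see $(\xi^j d):d=(d\xi^j):d$), which up to the harmless factor $2$ is precisely $P_\xi(d)$ with $\alpha=\widehat\gamma_0'''$, $\beta=\widehat\gamma_1'''$, $\gamma=\widehat\gamma_2'''$ in the notation of Lemma~\ref{un poil moins trivial}. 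Hence hypothesis \eqref{le premier premier choix} is case i) of that lemma and hypothesis \eqref{le premier deuxième choix} is case ii) (the stated expression $\bigl(\tau_2(\widehat\gamma_1''')^2+\tfrac{\tau_2^2}{2}(\widehat\gamma_2''')^2+2\tau_3\widehat\gamma_1'''\widehat\gamma_2'''\bigr)^{1/2}$ being exactly $\|\widehat\gamma_1'''\xi+\widehat\gamma_2'''\xi^2\|$ via $\|\xi\|^2=\tau_2$, $\xi:\xi^2=\tau_3$, $\|\dev(\xi^2)\|^2=\tfrac16\tau_2^2$ together with $\|\xi^2\|^2 = \|\dev(\xi^2)\|^2 + \tfrac13(\tr\xi^2)^2 = \tfrac12\tau_2^2$), so in either case $P_\xi(d)\ge0$ and the second piece is $\ge0$. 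Adding the two nonnegative pieces gives \eqref{condition 2}.

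**Main obstacle.** The only delicate point is bookkeeping in the expansion: one must correctly identify which cross terms land in the $2\times2$ form of the first piece and verify that every occurrence of $I=\xi^0$ contracted against $d$ drops out by $\tr d=0$, and one must confirm the norm identity $\|\widehat\gamma_1'''\xi+\widehat\gamma_2'''\xi^2\|^2=\tau_2(\widehat\gamma_1''')^2+\tfrac{\tau_2^2}{2}(\widehat\gamma_2''')^2+2\tau_3\widehat\gamma_1'''\widehat\gamma_2'''$ so that \eqref{le premier deuxième choix} matches condition ii) of Lemma~\ref{un poil moins trivial} verbatim. None of this is conceptually hard; it is a careful but routine computation, so the proof reduces to citing Lemma~\ref{un poil moins trivial} twice after the decomposition.
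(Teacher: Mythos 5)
Your proof is correct and follows essentially the same route as the paper's: the same splitting of $\widehat\sigma_1(\xi)[d]:d$ into the $\widehat\gamma',\widehat\gamma''$ part (a positive semidefinite quadratic form in $\tr(\xi d)$ and $\tr(\xi^2 d)$) and the $\widehat\gamma'''$ part handled by the two cases of Lemma~\ref{un poil moins trivial} with $\alpha=\widehat\gamma_0'''$, $\beta=\widehat\gamma_1'''$, $\gamma=\widehat\gamma_2'''$. Your verification of the norm identity matching \eqref{le premier deuxième choix} to case ii) is already built into the statement of that lemma, so nothing further is needed.
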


\begin{proof}
Let us rewrite \eqref{isotropie 1} as follows (still omitting the arguments $(\theta, \tr(\xi^2), \tr(\xi^3))$ for brevity):
$$
    \widehat\sigma_1(\xi)[d]=\tr(\xi d)\widehat P_1(\xi)+\tr(\xi^2 d)\widehat P_2(\xi)+ 2{\,}\widehat\gamma'''_0 d+\widehat\gamma'''_1(\xi d+d \xi) +\widehat\gamma'''_2(\xi^2 d+d \xi^2),
$$
with 
$$
    \widehat P_1(\xi)=\widehat\gamma_0'I+\widehat\gamma_1'\xi+\widehat\gamma_2'\xi^2\text{ and }\widehat P_2(\xi)=\widehat\gamma_0''I+\widehat\gamma_1''\xi+\widehat\gamma_2''\xi^2.
$$
Let us set
\begin{equation*}
    \widehat\sigma_1(\xi)[d]=\widehat\sigma'_1(\xi)[d]+\widehat\sigma'''_1(\xi)[d],
\end{equation*}
with
\begin{equation}
    \widehat\sigma'_1(\xi)[d]:d=\tr(\xi d)\widehat P_1(\xi):d+\tr(\xi^2 d)\widehat P_2(\xi):d,
\end{equation}
and
\begin{equation}\label{sigma tierce}
    \widehat\sigma_1'''(\xi)[d]:d=2\bigl(\widehat\gamma'''_0 d:d +\widehat\gamma_1'''\xi d:d+\widehat\gamma'''_2 \xi d: \xi d\bigr).
\end{equation}
since $\xi^2d:d=d\xi^2:d=\xi d:\xi d$.

Our goal now is to give conditions on the functions $\widehat\gamma$ that will make both of these terms separately always nonnegative.
First of all, we have 
\begin{equation*}
    \widehat\sigma'_1(\xi)[d]:d=\widehat\gamma_1'\bigl(\tr(\xi d)\bigl)^2+(\widehat\gamma_2'+\widehat\gamma_1'')\tr(\xi d)\tr(\xi^2 d)
    +\widehat\gamma_2''\bigl(\tr(\xi^2 d)\bigl)^2.
\end{equation*}
The conditions $\widehat\gamma_1'\ge0$, $\widehat\gamma_2''\ge0$ and $(\widehat\gamma_2'+\widehat\gamma_1'')^2-4\widehat\gamma'_1\widehat\gamma''_2\le 0$ make it nonnegative.  
For the second term, we use Lemma \ref{un poil moins trivial} with expression \eqref{sigma tierce}. Case i) corresponds to \eqref{le premier premier choix} and case ii) to \eqref{le premier deuxième choix}.
\end{proof}

\begin{remark}
 The functions $\widehat\gamma'_0$ and $\widehat\gamma''_0$ obviously play no role in the internal dissipation, and are thus not subjected to any constraint related to the second law.
\end{remark}

The quadratic form in inequality \eqref{condition 3} 
is perhaps
less accessible than the first one in \eqref{condition 2} in all generality. 

First of all, the term $\widehat\sigma_0$ corresponds to a residual Cauchy stress at rest $d=0$, that incorporates temperature and order tensor effects. Its deviatoric part belongs to $\vect(\xi,\dev(\xi^2))$, so that we can write 
\begin{equation}\label{dev sigma0}
    \dev(\widehat\sigma_0(\theta,\xi))=2\rho\widehat\lambda_0(\theta,\xi)\frac{\partial\widehat a_m}{\partial \xi}(\theta,\xi)+\widehat\sigma_0^*(\theta,\xi),
\end{equation}
for some scalar-valued, objective function $\widehat\lambda_0$,  with $\widehat\sigma_0^*$ as in Proposition
\ref{une forme explicite non mais !}, that is to say that $\widehat\sigma_0^*(\theta,\xi)$ is orthogonal to $\frac{\partial\widehat a_m}{\partial \xi}(\theta,\xi)$ in $\vect(\xi,\dev(\xi^2))$.

We consider the simpler case when $\widehat\sigma_1(\theta,\xi)=2\widehat\gamma_0'''(\theta, \tr(\xi^2), \tr(\xi^3))\textnormal{id}$, with $\widehat\gamma_0'''\ge0$ as per Proposition~\ref{sufficient form 1}. 
This produces a Newtonian stress term with viscosity dependent on temperature and order tensor, and we switch to the usual notation for the dynamic viscosity coefficient, \emph{i.e.}, 
$$
    \widehat\sigma_1(\theta,\xi)[d]=2\widehat\mu_1(\theta,\xi)d,
$$ 
where $\widehat\mu_1(\theta,\xi)=\widehat\gamma_0'''(\theta, \tr(\xi^2), \tr(\xi^3))$.

Next, following \eqref{convenient}, we make the assumption $\widehat k_0(\theta,\xi)=-\widehat\eta_0(\theta,\xi)\frac{\partial\widehat a_m}{\partial\xi}(\theta,\xi)$ with $\widehat\eta_0\ge0$. This is without real loss of generality in terms of internal dissipation, due to Proposition \ref{une forme explicite non mais !}.

Finally, still in a spirit of simplification, we assume that 
$$
    \widehat k_1(\theta,\xi)=2\widehat\eta_1(\theta, \xi)\textnormal{id},
$$
with yet another scalar-valued, objective function $\widehat\eta_1$.

Note that the scalar-valued functions $\widehat\lambda_0$ and $\widehat\mu_1$ are known as soon as $\widehat \sigma_0$ and $\widehat\sigma_1$ are given. The proposition below states the conditions that the flow rule should satisfy in order to be compatible with the stress laws and the Helmholtz energy.

\begin{proposition} \label{sigma0*}
Under the above assumptions, condition \eqref{condition 3} holds if and only if 
\begin{equation}\label{on fait ce qu'on peut}
    \bigl\|\widehat\sigma_0^*(\theta,\xi)\bigl\|^2\le 4\rho\Bigl(2\widehat\mu_1(\theta,\xi)\widehat\eta_0(\theta,\xi)-\rho\bigl(\widehat\lambda_0(\theta,\xi)-\widehat\eta_1(\theta,\xi)\bigr)^2\Bigr)\Bigl\|\frac{\partial\widehat a_m}{\partial \xi}(\theta,\xi)\Bigr\|^2,
\end{equation}
for all $\theta,\xi$. 
The condition
\begin{equation}\label{on fait ce qu'on peut 2}
    \rho\bigr(\widehat\lambda_0(\theta,\xi)-\widehat\eta_1(\theta,\xi)\bigl)^2\le 2\widehat\mu_1(\theta,\xi)\widehat\eta_0(\theta,\xi),
\end{equation}
when $\frac{\partial\widehat a_m}{\partial\xi}(\theta,\xi)\neq0$ is then necessary.
\end{proposition}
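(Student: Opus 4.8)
The plan is to substitute the four simplifying assumptions into condition \eqref{condition 3} and reduce it, via a Cauchy--Schwarz argument, to a single scalar inequality. Throughout I abbreviate $b:=\frac{\partial\widehat a_m}{\partial\xi}(\theta,\xi)$ and omit the arguments $(\theta,\xi)$, and I fix $\bar d\in V$ with $\|\bar d\|=1$. First I would assemble the ingredients: from $\widehat\sigma_1[d]=2\widehat\mu_1 d$ one gets $(\widehat\sigma_1[\bar d]):\bar d=2\widehat\mu_1\|\bar d\|^2=2\widehat\mu_1$; from $\widehat k_0=-\widehat\eta_0 b$ one gets $b:\widehat k_0=-\widehat\eta_0\|b\|^2$; from $\widehat k_1=2\widehat\eta_1\,\textnormal{id}$ one gets $\rho\,b:(\widehat k_1[\bar d])=2\rho\widehat\eta_1(b:\bar d)$; and since $\bar d$ is traceless and $\widehat\sigma_0$ symmetric, $\widehat\sigma_0:\bar d=\dev(\widehat\sigma_0):\bar d$, which by \eqref{dev sigma0} equals $2\rho\widehat\lambda_0(b:\bar d)+\widehat\sigma_0^*:\bar d$. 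Collecting these, condition \eqref{condition 3} becomes
$$\bigl(2\rho(\widehat\lambda_0-\widehat\eta_1)(b:\bar d)+\widehat\sigma_0^*:\bar d\bigr)^2\le 8\rho\,\widehat\mu_1\widehat\eta_0\|b\|^2,$$
which must hold for every unit $\bar d\in V$.

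Next I would rewrite the left-hand side as $\bigl((2\rho(\widehat\lambda_0-\widehat\eta_1)b+\widehat\sigma_0^*):\bar d\bigr)^2$. The matrix $2\rho(\widehat\lambda_0-\widehat\eta_1)b+\widehat\sigma_0^*$ lies in $V$: indeed $b\in V$ by \eqref{forme de da/dxi} and $\widehat\sigma_0^*\in\vect(\xi,\dev(\xi^2))\subset V$ by Proposition~\ref{une forme explicite non mais !}. Hence the supremum over unit $\bar d\in V$ of its inner product with $\bar d$ is attained and equals its Frobenius norm, so the displayed inequality holds for all unit $\bar d$ if and only if
$$\bigl\|2\rho(\widehat\lambda_0-\widehat\eta_1)b+\widehat\sigma_0^*\bigr\|^2\le 8\rho\,\widehat\mu_1\widehat\eta_0\|b\|^2.$$
Since $\widehat\sigma_0^*$ is orthogonal to $b$, the left-hand side expands as $4\rho^2(\widehat\lambda_0-\widehat\eta_1)^2\|b\|^2+\|\widehat\sigma_0^*\|^2$; moving the first summand to the right and factoring out $4\rho$ produces exactly \eqref{on fait ce qu'on peut}, which establishes the claimed equivalence.

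Finally, for the necessity of \eqref{on fait ce qu'on peut 2}: when $b\ne 0$, inequality \eqref{on fait ce qu'on peut} forces its right-hand side to be nonnegative, and dividing by $4\rho\|b\|^2>0$ gives $2\widehat\mu_1\widehat\eta_0-\rho(\widehat\lambda_0-\widehat\eta_1)^2\ge0$, i.e.\ \eqref{on fait ce qu'on peut 2}. I do not expect any serious obstacle here; the only step requiring a moment of care is the Cauchy--Schwarz reduction, where one must check that the optimizing direction $2\rho(\widehat\lambda_0-\widehat\eta_1)b+\widehat\sigma_0^*$ genuinely belongs to $V$ so that it is an admissible choice of $\bar d$ (and the degenerate case where this vector vanishes makes both sides of the equivalence trivially true). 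Everything else is straightforward bookkeeping.
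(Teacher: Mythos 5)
Your proof is correct and follows essentially the same route as the paper: substitute the simplifying assumptions into \eqref{condition 3}, observe that the left-hand side is $\bigl((2\rho(\widehat\lambda_0-\widehat\eta_1)\frac{\partial\widehat a_m}{\partial\xi}+\widehat\sigma_0^*):\bar d\bigr)^2$, maximize over unit $\bar d$, and use the orthogonality of $\widehat\sigma_0^*$ and $\frac{\partial\widehat a_m}{\partial\xi}$ to split the squared norm. Your extra check that the optimizing matrix actually lies in $V$ (so the supremum is its full Frobenius norm) is a point the paper leaves implicit, but otherwise the two arguments coincide.
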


\begin{proof}
Let us replace all of our above constitutive assumptions into \eqref{condition 3}. This yields 
$$
    \biggl(\Bigl(2\rho(\widehat\lambda_0(\theta,\xi)-\widehat\eta_1(\theta,\xi))\frac{\partial\widehat a_m}{\partial \xi}(\theta,\xi)+\widehat\sigma_0^*(\theta,\xi)\Bigr):\bar d\biggr)^2
    \le8\rho\widehat\mu_1(\theta,\xi)\widehat\eta_0(\theta,\xi)\Bigl\|\frac{\partial\widehat a_m}{\partial\xi}\Bigr\|^2,
$$
for all $\bar d$ on the unit sphere.
The right-hand side does not depend on $\bar d$ and the maximum of the left-hand side is equal to $ \bigl\|2\rho(\widehat\lambda_0-\widehat\eta_1)\frac{\partial\widehat a_m}{\partial \xi}+\widehat\sigma_0^*\bigr\|^2=\bigl\|2\rho(\widehat\lambda_0-\widehat\eta_1)\frac{\partial\widehat a_m}{\partial \xi}\bigr\|^2+\bigl\|\widehat\sigma_0^*\bigr\|^2$, since $\frac{\partial\widehat a_m}{\partial \xi}$ and $\widehat\sigma_0^*$ are orthogonal.
Hence the result.
\end{proof}

\begin{remark}
When $\frac{\partial\widehat a_m}{\partial\xi}(\theta,\xi)=0$,  condition \eqref{on fait ce qu'on peut} is equivalent to $\dev(\widehat\sigma_0(\theta,\xi))=0$, which is already clear from \eqref{condition 3}. On the other hand,  when $\frac{\partial\widehat a_m}{\partial\xi}(\theta,\xi)\neq0$, we can write $\widehat\sigma_0^*=\omega^*\varsigma^*$ where $\varsigma^*\in \vect(\xi,\dev(\xi^2))$ is such that $\varsigma^*:\frac{\partial\widehat a_m}{\partial\xi}=0$ and $\|\varsigma^*\|=\bigl\|\frac{\partial\widehat a_m}{\partial\xi}\bigr\|$, with $\omega^*=0$ when $\xi$ is uniaxial. 
Using this representation, inequality \eqref{on fait ce qu'on peut} becomes
\begin{equation}\label{on fait ce qu'on peut 3}
   \omega^*(\theta,\xi)^2\le 4\rho\Bigl(2\widehat\mu_1(\theta,\xi)\widehat\eta_0(\theta,\xi)-\rho\bigl(\widehat\lambda_0(\theta,\xi)-\widehat\eta_1(\theta,\xi)\bigr)^2\Bigr).
\end{equation}
Note that we can always find $\widehat\eta_0$ nonnegative and $\widehat\eta_1$ such that \eqref{on fait ce qu'on peut} is satisfied.

An interesting particular case is $\widehat\sigma^*_0=0$, in which case the necessary and sufficient condition \eqref{on fait ce qu'on peut} simplifies as \eqref{on fait ce qu'on peut 2} when $\frac{\partial\widehat a_m}{\partial\xi}(\theta,\xi)\neq0$, which is thus also sufficient in that case. 
\end{remark}

\begin{remark}
It is possible to be slightly more general concerning the choice of $\widehat k_1$, while keeping the hypothesis $\widehat\sigma_1(\theta,\xi)[d]=2\widehat\mu_1(\theta,\xi)d$. 
For any $\widehat k_1$ given by \eqref{isotropie 3}, condition \eqref{condition 3} reads
\begin{equation}
    \Bigl\| \dev \widehat\sigma_0(\theta,\xi) -\rho\widehat k_1^T(\theta,\xi)\frac{\partial\widehat a_m}{\partial\xi}(\theta,\xi)\Bigr\|^2\le 8\rho\widehat\mu_1(\theta,\xi)\widehat\eta_0(\theta,\xi) \Bigl\|\frac{\partial\widehat a_m}{\partial\xi}(\theta,\xi)\Bigr\|^2,
\end{equation}
where for any $(\theta,\xi)$, $\widehat k_1^T\!(\theta,\xi)$ is the adjoint of $\widehat k_1(\theta,\xi)$. It is readily seen that, for all $d'\in V$,
\begin{multline}
    \widehat k_1^T\!(\theta,\xi) [d']=\biggl(\Bigl(\sum_{j=1}^2\widehat\delta'_j\dev(\xi^j)\Bigr){:}\,d'\biggr)\xi
    +\biggl(\Bigl(\sum_{j=1}^2\widehat\delta''_j\dev(\xi^j)\Bigr){:}\,d'\biggr)\!\dev(\xi^2)\\+\sum_{j=0}^3\widehat\delta'''_j\dev(\xi^jd'{+}d'\xi^j).
\end{multline} 
The sum of the first two terms belongs to 
$\vect(\xi,\dev(\xi^2))$. When $d'=\frac{\partial\widehat a_m}{\partial\xi}(\theta,\xi)$, the third term is in $\vect(\xi,\dev(\xi^2))$ as well due to the Cayley-Hamilton theorem. In the same vein as above, we can write
\begin{multline*}
    \dev \widehat\sigma_0(\theta,\xi) -\rho\widehat k_1^T(\theta,\xi)\frac{\partial\widehat a_m}{\partial\xi}(\theta,\xi)
    = \widehat \lambda_1(\theta,\xi) \frac{\partial\widehat a_m}{\partial\xi}(\theta,\xi) \\+ \Bigl(\dev \widehat\sigma_0(\theta,\xi) -\rho\widehat k_1^T(\theta,\xi)\frac{\partial\widehat a_m}{\partial\xi}(\theta,\xi)\Bigl)^*,
\end{multline*}
where the last term can be explicitly computed in $\vect(\xi,\dev(\xi^2))$.
Condition \eqref{condition 3} now reads
\begin{equation*}
    \Bigl\| \widehat\sigma_0^*(\theta,\xi) -\rho\Bigl(\widehat k_1^T(\theta,\xi)\frac{\partial\widehat a_m}{\partial\xi}(\theta,\xi)\Bigl)^*\Bigr\|^2
    \leq \Bigl(8\rho\widehat\mu_1(\theta,\xi)\widehat\eta_0(\theta,\xi) - \widehat \lambda_1^2(\theta,\xi)\Bigr) \Bigl\|\frac{\partial\widehat a_m}{\partial \xi}(\theta,\xi)\Bigr\|^2.
\end{equation*}

The result in Proposition \ref{sigma0*} had all $\widehat\delta_j'$, $\widehat\delta_j''$, $\widehat\delta_j'''$ equal to $0$ except $\widehat\delta_0'''=\widehat\eta_1$.
\end{remark}

\begin{example}
To sum things up, in the case of the de Gennes free energy  \eqref{bulk dG}, 
we have
$$
    \widehat a_m(\theta,\xi)=\alpha(\theta-\theta^*)\tr(\xi^2)+\frac c2\bigl(\tr(\xi^2)\bigr)^2-\frac{2b}{3}\tr(\xi^3),
$$
and 
$$
    \frac{\partial\widehat a_m}{\partial\xi}(\theta,\xi)=2\bigl(\alpha(\theta-\theta^*)+c\tr(\xi^2)\bigr)\xi-2b\dev(\xi^2),
$$
which, together with the previous hypotheses on $\widehat\sigma_1$, $\widehat k_0$, and $\widehat k_1$,  and conditions \eqref{on fait ce qu'on peut} or \eqref{on fait ce qu'on peut 3}, plus 
condition \eqref{condition thermique} on $\widehat\kappa$,
yields a family of MacMillan internal variable models compatible with the de Gennes energy, that is guaranteed to satisfy the second law of thermodynamics. 
A possible simple choice for the stress tensor is
$$
    \widehat\sigma(\theta,d,\xi)=4\rho\widehat\lambda_0(\theta,\xi)\bigl((\alpha(\theta-\theta^*)+c\tr(\xi^2))\xi-b\dev(\xi^2)\bigr)+2\widehat\mu_1(\theta,\xi)d,
$$
and for the flow rule 
$$
    \widehat k_s(\theta,\xi)= -2\widehat\eta_0(\theta,\xi)\bigl((\alpha(\theta-\theta^*)+c\tr(\xi^2))\xi-b\dev(\xi^2)\bigr)
    +2\widehat\eta_1(\theta,\xi)d.
$$
with $\widehat\mu_1\ge 0$, $\widehat\eta_0\ge0$ and $\rho\bigr(\widehat\lambda_0(\theta,\xi)-\widehat\eta_1(\theta,\xi)\bigl)^2\le 2\widehat\mu_1(\theta,\xi)\widehat\eta_0(\theta,\xi)$.

\end{example}

Let us close this section with two remarks in particular cases. 
\begin{remark}
The first remark concerns the uniaxial case, under some of the above hypotheses. Let us assume that the fluid is at rest $v=0$ and in thermal equilibrium $\theta(x,t)=\theta_0(x)$. The evolution of the order tensor $\xi$ is then decoupled from the dynamics and heat equations, and the former equilibrium hypotheses can be ensured in principle by adjusting the body forces and heat sources as usual. We assume \eqref{flowruleMcM} in conjunction with \eqref{forme de k0}. We show that, in this case, if $\xi$ is uniaxial at some time $t_0$, then it is uniaxial for all $t$, for any Helmholtz free energy, without any reference to the validity of the second law.

First of all, $\frac{\partial\widehat a_m}{\partial\xi}(\theta,0)=0$ so that $\xi=0$ is an equilibrium point for the evolution of $\xi$. It is thus enough to consider $\xi(t_0)=\xi_0=s_0\bigl(n_0\otimes n_0-\frac13I\bigr)$ for some unit vector $n_0$ and scalar $s_0\neq0$.

Let us generically rewrite the evolution equation for $\xi$ at some point $x$ fixed as,
$$
    \frac{\partial\xi}{\partial t}=f_1(\xi)\xi+f_2(\xi)\dev(\xi^2),\quad \xi(t_0)=\xi_0,
$$
with $f_1,f_2$ scalar-valued, cf. \eqref{forme de da/dxi} and \eqref{forme de k0}, using the fact that $v=0$ and $\theta$ is constant in time and independent of $\xi$.  

Now if $\xi$ is uniaxial, then $\dev(\xi^2)=\frac s3\xi$. Moreover, we can recover $s$ from $\xi$ with the formula $s=3\sqrt[3]{\det(\xi)/2}$.
Let us set $\ell(\xi)=\bigl(f_1(\xi)+\sqrt[3]{\det(\xi)/2}f_2(\xi)\bigr)$ and consider the auxiliary Cauchy problem
$$
    \frac{\partial\xi}{\partial t}=\ell(\xi)\xi,\quad \xi(t_0)=\xi_0.
$$
This Cauchy problem satisfies the conditions of the Picard-Lindelöf theorem. It thus has a unique maximal solution, which is obviously of the form $\xi(t)=\frac{s(t)}{s_0}\xi_0$ with $s(t_0)=s_0$, by local uniqueness, and thus remains uniaxial for all $t$. It is consequently the solution of the original Cauchy problem, also by local uniqueness.
\end{remark}

\begin{remark}
The second remark has to do with the time evolution of the free energy and free energy minimization. In works dealing with the static case, much of the focus is indeed on minimizing the free energy, see \cite{JMB}. Let us assume again that $v=0$ and $\theta(x,t)=\theta_0(x)$. In this case, $\frac{\partial a_m}{\partial t}(x,t)=\dot a_m(x,t)=-\frac{d_{\textnormal{int}}(x,t)}{\rho}+\sigma(x,t):d(x,t)-\dot\theta(x,t) s_m(x,t)=-\frac{d_{\textnormal{int}}(x,t)}{\rho}\le 0$ and the free energy density at point $x$ fixed is nonincreasing in time because of the second law, without any additional hypothesis.
Since $a_m(x,t)=\widehat a_m(\theta_0(x),\xi(x,t))$, if 
$\widehat a_m$ is coercive with respect to $\xi$, as is the case for the de Gennes energy, then it is a Lyapunov function for the evolution of $\xi$. Therefore, we have global existence in time and boundedness for $\xi(x,t)$. Naturally, this still does not mean that $\xi(x,t)$ must converge to a minimizer of $\widehat a_m(\theta_0(x),\xi)$ when $t\to+\infty$, even in this simple static case.
\end{remark}

\section{A generalization of MacMillan's model}\label{Section trois pas en avant}
Let us go one step beyond MacMillan's model in the same direction. We assume constitutive laws of the form 
\begin{equation}\label{sigma McM plus}
     \widehat\sigma(h,\theta,\xi)=\widehat\sigma_0(\theta,\xi)+\widehat \sigma_1(\theta,\xi)[d]+\widehat\sigma_2(\theta,\xi)[d^2],
\end{equation}
and
\begin{equation}\label{flowrule McM plus}
    \widehat k_s(d,\theta,\xi)=\widehat k_0(\theta,\xi)+\widehat k_1(\theta,\xi)[d]+\widehat k_2(\theta,\xi)[d^2],
\end{equation}
where $\widehat\sigma_0$ and $\widehat\sigma_1$ are as in \eqref{isotropie 0} and \eqref{isotropie 1}, $\widehat k_0$  and $\widehat k_1$ as in \eqref{isotropie 2}
and \eqref{isotropie 3}, and $\widehat\sigma_2$ and $\widehat k_2$ are additional objective functions with values in $\mathcal{L}(\Sym_3;\Sym_3)$ and $\mathcal{L}(\Sym_3;V)$ respectively. In view of formula \eqref{stress R-E} and for simplicity instead of going for full generality, we take the quadratic stress part as follows,
\begin{equation}\label{sigma 2, indices 3 7+}
    \widehat\sigma_2(\theta,\xi)[d^2]=\widehat\alpha_2(\theta,\xi)\tr(d^2)\xi+\widehat\alpha_3(\theta,\xi)d^2+\widehat\alpha_7(\theta,\xi)(\xi d^2+d^2\xi).
\end{equation}

In general, we can also expect additional terms involving $\tr(d^2)$, $\tr(\xi d^2)$, and $\tr(\xi^2d^2)$ multiplied by objective tensor-valued functions of $\xi$.

\begin{proposition}\label{plus fort que McM...+}
A set of necessary and sufficient conditions for the extended model to satisfy the mechanical/internal Clausius-Planck inequality is 
\begin{gather}
    \widehat\sigma_2=0,\label{condition 0 ++}\\
    \frac{\partial\widehat a_m}{\partial\xi}(\theta,\xi):\widehat k_0(\theta,\xi)\le0,
    \label{condition 1 ++}\\
    \bigl(\widehat\sigma_1(\theta,\xi)[\bar d\,]\bigr):\bar d-\rho\frac{\partial\widehat a_m}{\partial\xi}(\theta,\xi):\bigl(\widehat k_2(\theta,\xi)[\bar d{\,}^2]\bigr)\ge0,
    \label{condition 2 ++}\\
    \Bigl(\widehat\sigma_0(\theta,\xi):\bar d-\rho\frac{\partial\widehat a_m}{\partial\xi}(\theta,\xi):\bigl(\widehat k_1(\theta,\xi)[\bar d\,]\bigr)\Bigr)^2
    \hskip 7.5cm\nonumber\\
    {}+4\rho\Bigl(\frac{\partial\widehat a_m}{\partial\xi}(\theta,\xi):\widehat k_0(\theta,\xi)\Bigr)\Bigl(\bigl(\widehat \sigma_1(\theta,\xi)[\bar d\,]\bigr):\bar d-\rho\frac{\partial\widehat a_m}{\partial\xi}(\theta,\xi):\bigl(\widehat k_2(\theta,\xi)[\bar d{\,}^2]\bigr)\Bigr)\le0.
    \label{condition 3 ++}
\end{gather}
for all $\bar d\in V$, $\|\bar d\|=1$, $\theta\in \R_+^*$, $\xi\in V$.
\end{proposition}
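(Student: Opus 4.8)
The plan is to reduce the mechanical/internal Clausius--Planck inequality \eqref{CP mech const} to the nonnegativity of a one-parameter cubic polynomial, in the spirit of the proof of Proposition~\ref{les conditions pour McM} but with one extra degree in $d$. First I would insert the extended ansatz \eqref{sigma McM plus}--\eqref{flowrule McM plus} into the reduced dissipation law \eqref{plus concis} of Proposition~\ref{une premiere simplification} and group terms by their degree of homogeneity in $d$ (keeping $\theta,\xi$ as silent parameters), writing
$$
    \widehat d_{\textnormal{int}}(d)=D_0+D_1(d)+D_2(d)+D_3(d),
$$
with $D_0=-\rho\frac{\partial\widehat a_m}{\partial\xi}:\widehat k_0$, $D_1(d)=\widehat\sigma_0:d-\rho\frac{\partial\widehat a_m}{\partial\xi}:\widehat k_1[d]$, $D_2(d)=\widehat\sigma_1[d]:d-\rho\frac{\partial\widehat a_m}{\partial\xi}:\widehat k_2[d^2]$ and $D_3(d)=\widehat\sigma_2[d^2]:d$; the terms $D_2$ and $D_3$ are genuinely quadratic, resp. cubic, in $d$ because $\widehat k_2$ and $\widehat\sigma_2$ are linear and evaluated at $d^2$.

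Next, for a fixed unit vector $\bar d\in V$ I would test the inequality on the ray $d=\lambda\bar d$, $\lambda\in\R$; then \eqref{CP mech const} becomes the requirement that
$$
    P_{\bar d}(\lambda)=D_3(\bar d)\lambda^3+D_2(\bar d)\lambda^2+D_1(\bar d)\lambda+D_0\ge0\qquad\text{for all }\lambda\in\R.
$$
Since a real polynomial of degree at most three is nonnegative on $\R$ if and only if its cubic coefficient vanishes and the remaining quadratic is nonnegative, this amounts to $D_3(\bar d)=0$, $D_0\ge0$, $D_2(\bar d)\ge0$ and $D_1(\bar d)^2\le4D_0D_2(\bar d)$. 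Letting $\bar d$ range over the unit sphere of $V$ and using homogeneity, the last three are exactly \eqref{condition 1 ++}, \eqref{condition 2 ++} and \eqref{condition 3 ++} (noting $4\rho\frac{\partial\widehat a_m}{\partial\xi}:\widehat k_0=-4D_0$), while $D_3(\bar d)=0$ for all $\bar d$ is what should yield \eqref{condition 0 ++}.

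This last implication is the step I expect to require the most care. To turn $\widehat\sigma_2[\bar d^2]:\bar d=0$ for all $\bar d\in V$ into \eqref{condition 0 ++}, I would expand using the special form \eqref{sigma 2, indices 3 7+} and invoke the Cayley--Hamilton identity for traceless $3\times3$ matrices, $d^3=\tfrac12\tr(d^2)\,d+\det(d)\,I$, which gives $\tr(d^3)=3\det d$ and $\tr(\xi d^3)=\tfrac12\tr(d^2)\tr(\xi d)$ (using $\tr\xi=0$); the cubic form in $d$ then collapses to $(\widehat\alpha_2+\widehat\alpha_7)\tr(d^2)\tr(\xi d)+3\widehat\alpha_3\det d$. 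Checking that $d\mapsto\tr(d^2)\tr(\xi d)$ and $d\mapsto\det d$ are linearly independent forms on $V$ (for $\xi\ne0$; the case $\xi=0$ is immediate) forces $\widehat\alpha_3=0$ and $\widehat\alpha_2+\widehat\alpha_7=0$, so that the part of $\widehat\sigma_2$ surviving the dissipation balance, namely $\widehat\alpha_2\bigl(\tr(d^2)\xi-\xi d^2-d^2\xi\bigr)$, contributes nothing to $\widehat d_{\textnormal{int}}$ and one may take $\widehat\sigma_2=0$ as in \eqref{condition 0 ++}. This interplay with Cayley--Hamilton and the accompanying independence argument are the main obstacle; everything else is routine bookkeeping.

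Finally, for the converse I would simply reverse the computation: assuming \eqref{condition 0 ++}--\eqref{condition 3 ++} we have $D_3\equiv0$, $D_0\ge0$, $D_2(\bar d)\ge0$ and $D_1(\bar d)^2\le4D_0D_2(\bar d)$ for every unit $\bar d\in V$, so $P_{\bar d}(\lambda)\ge0$ for all $\lambda\in\R$; since every $d\in V$ is of the form $\lambda\bar d$, this is precisely $\widehat d_{\textnormal{int}}(d)\ge0$ on $V$, which is \eqref{CP mech const}.
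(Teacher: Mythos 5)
Your proposal follows the same overall strategy as the paper: substitute $d=\lambda\bar d$, observe that the dissipation is a cubic polynomial in $\lambda$, force the cubic coefficient $\widehat\sigma_2[\bar d^{\,2}]:\bar d$ to vanish by letting $\lambda\to\pm\infty$, and identify the remaining quadratic nonnegativity with \eqref{condition 1 ++}--\eqref{condition 3 ++}. Where you genuinely diverge is in the treatment of the identity $\widehat\sigma_2[d^2]:d\equiv0$. The paper tests it on explicit matrices ($d_1=e_1\otimes e_1-e_2\otimes e_2$, then $d_2=2e_1\otimes e_1-e_2\otimes e_2-e_3\otimes e_3$, then $d=\xi$) to conclude successively $\widehat\alpha_2+\widehat\alpha_7=0$, $\widehat\alpha_3=0$ and finally $\widehat\alpha_2=0$. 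You instead collapse the cubic form via Cayley--Hamilton, $\tr(d^3)=3\det d$ and $\tr(\xi d^3)=\frac12\tr(d^2)\tr(\xi d)$, obtaining $(\widehat\alpha_2+\widehat\alpha_7)\tr(d^2)\tr(\xi d)+3\widehat\alpha_3\det d=0$, whence only $\widehat\alpha_3=0$ and $\widehat\alpha_2+\widehat\alpha_7=0$. Your route is not just different, it exposes a flaw in the paper's third step: once $\widehat\alpha_3=0$ and $\widehat\alpha_7=-\widehat\alpha_2$, the residual expression is $\widehat\alpha_2\bigl(\tr(d^2)\tr(\xi d)-2\tr(\xi d^3)\bigr)$, which is \emph{identically zero} by the very Cayley--Hamilton identity you invoke; the paper writes it as $\widehat\alpha_2\,\xi:(\|d\|^2d-d^3)=\widehat\alpha_2\bigl(\tr(d^2)\tr(\xi d)-\tr(\xi d^3)\bigr)$, dropping a factor $2$, and the ensuing deduction $\widehat\alpha_2=0$ does not follow. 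Consequently \eqref{condition 0 ++} as stated is sufficient but not literally necessary: the sharp necessary condition on the ansatz \eqref{sigma 2, indices 3 7+} is $\widehat\alpha_3=0$ and $\widehat\alpha_2+\widehat\alpha_7=0$, and a nonzero $\widehat\sigma_2=\widehat\alpha_2\bigl(\tr(d^2)\xi-\xi d^2-d^2\xi\bigr)$ survives as a powerless stress contribution. Your closing remark that ``one may take $\widehat\sigma_2=0$'' is therefore the honest resolution, though you should say explicitly that you are replacing \eqref{condition 0 ++} by the weaker, genuinely necessary condition $\widehat\sigma_2[d^2]:d=0$.

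Two small points to tighten. First, justify the linear independence of $d\mapsto\tr(d^2)\tr(\xi d)$ and $d\mapsto\det d$ on $V$ for $\xi\neq0$: in a diagonalizing basis for $\xi$, the matrix $d$ with zero diagonal and all off-diagonal entries equal to $1$ satisfies $\tr(\xi d)=0$ and $\det d=2\neq0$, which kills the $\det$ coefficient, and then $d=\xi$ kills the other. Second, your reduction of ``cubic nonnegative on $\R$'' to ``leading coefficient zero plus quadratic conditions'' is correct, but note that the three quadratic conditions $D_0\ge0$, $D_2(\bar d)\ge0$, $D_1(\bar d)^2\le4D_0D_2(\bar d)$ are exactly equivalent to nonnegativity (the degenerate case $D_2(\bar d)=0$ forcing $D_1(\bar d)=0$ is absorbed by the discriminant condition), so the equivalence with \eqref{condition 1 ++}--\eqref{condition 3 ++} is indeed exact. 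The converse direction is fine as written.
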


\begin{proof}
We rewrite the  Clausius-Planck inequality for the extended model for the reader's convenience, 
\begin{equation}\label{CP ordre 2+}
    \bigl(\widehat\sigma_0+ \widehat\sigma_1[d]+\widehat\sigma_2[d^2]\bigr) :d -\rho\frac{\partial\widehat a_m}{\partial\xi}:\bigl(\widehat k_0+\widehat k_1[d]+ \widehat k_2[d^2]\bigr)\ge0.
\end{equation}
Replacing $d=\lambda\bar d$, we obtain
\begin{multline}
    -\rho\frac{\partial\widehat a_m}{\partial\xi}:\widehat k_0+\lambda\Bigl(\widehat\sigma_0:\bar d{-}\rho\frac{\partial\widehat a_m}{\partial\xi}:\widehat k_1[\bar d\,]\Bigr)+\lambda^2\Bigl(\bigl(\widehat \sigma_1[\bar d\,]\bigr):\bar d{-}\rho\frac{\partial\widehat a_m}{\partial\xi}:\bigl(\widehat k_2[\bar d{\,}^2]\bigr)\Bigr)\\+\lambda^3\widehat\sigma_2[\bar d{\,}^2]:\bar d\ge 0.
\end{multline}
Letting $\lambda\to\pm\infty$, we see that we must have $\widehat\sigma_2[\bar d{\,}^2]:\bar d=0$ for all $\theta$, $\xi$, and $\bar d$, so that we can dispense with the $\|\bar d\|=1$ constraint, and write (without $\theta$)
$$
    \widehat\alpha_2(\xi)\tr(d^2)\tr(\xi d)+\widehat\alpha_3(\xi)\tr(d^3)+2\widehat\alpha_7(\xi)\tr(\xi d^3)=0
$$
for all $\xi$ and $d$. Let us discuss this equality according to $\xi=\sum_{i=1}^3\lambda_i e_i\otimes e_i$, with $\sum_{i=1}^3\lambda_i=0$ and $e_i$ orthonormal.
\begin{itemize}
    \item If $\xi\neq0$, then it has at least two distinct eigenvalues, say $\lambda_1\neq\lambda_2$ without loss of generality. We first take $d_1=e_1\otimes e_1-e_2\otimes e_2=d_1^3$ so that $\tr(d_1^3)=0$ and $\tr(\xi d_1)=\tr(\xi d_1^3)=\lambda_1-\lambda_2\neq 0$. It follows that $\widehat \alpha_2(\xi)+\widehat \alpha_7(\xi)=0$. Secondly, using now $d_2=2e_1\otimes e_1-e_2\otimes e_2-e_3\otimes e_3$, we see that $\widehat \alpha_3(\xi)=0$. 
    At this point, we have $\widehat\alpha_2(\xi)\xi:\bigl(\|d\|^2d-d^3
    \bigr)=0$ for all $d$.
    
    Thirdly, we take $d=\xi$. Then, $\xi:\bigl(\|d\|^2d-d^3)= \|\xi\|^4-\tr\xi^4= \frac12 \|\xi\|^4$, so that $\widehat\alpha_2(\xi)=0$ and in fine  $\widehat \sigma_2(\xi)=0$.
    \item If $\xi=0$, using the same $d_2$ as above, we also get $\alpha_3(0)=0$ so that again $\widehat\sigma_2(0)=0$.
\end{itemize}
It follows that \eqref{CP ordre 2+} is actually quadratic in $d$ and we are back in a setting that is similar to that of Proposition \ref{les conditions pour McM},
with the term $\bigl(\widehat\sigma_1[\bar d\,]\bigr):\bar d$ replaced by the term $\bigl(\widehat \sigma_1[\bar d\,]\bigr):\bar d-\rho\frac{\partial\widehat a_m}{\partial\xi}:\bigl(\widehat k_2[\bar d{\,}^2]\bigr)$.
\end{proof}

We thus see that a stress extension of this specific form is not thermodynamically meaningful: $\widehat \sigma$ must remain linear with respect to $d$. This does not rule out other, possibly non polynomial, extensions of the stress constitutive law.

Concerning the quadratic part of the flow rule, we make an assumption analogous to \eqref{sigma 2, indices 3 7+} for the stress, namely
\begin{align}
    \widehat k_2(\theta,\xi)[d^2]&=\widehat\delta_3(\theta,\tr(\xi^2),\tr(\xi^3))\tr(d^2)\xi+\widehat\delta_4\bigl(\theta,\tr(\xi^2),\tr(\xi^3)\bigr)\dev(d^2)\nonumber\\
    &\qquad\qquad\qquad\qquad\qquad\qquad+\widehat\delta_5\bigl(\theta,\tr(\xi^2),\tr(\xi^3)\bigr)\dev(\xi d^2+d^2\xi) \label{k 2, indices 3 7+}.
\end{align}
Again, more general laws could be considered, based on formula \eqref{stress R-E}. As opposed to the stress extension, this flow rule extension actually does play a thermodynamic  role. 

Indeed, we now proceed to show that the extended model is versatile enough to produce complex behaviors that are compatible with the second law of thermodynamics. We do not 
insist on the necessary side of Proposition \ref{plus fort que McM...+}, but rather concentrate on showing directly that there exist compatible extensions, while paying attention to the regularity of the $\widehat k_2$ constructed, since the latter enter the right-hand side of a differential equation.

We first define the following auxiliary functions in the variables $(\tau_2,\tau_3)\in\mathcal{D}$ which combine the Helmholtz energy and the second degree part of the flow rule, and which will appear naturally in ensuing computations:
\begin{multline} \label{les zetas+}
    \widehat\zeta_0=\Bigl(2\tau_2\frac{\partial\widehat\psi}{\partial\tau_2}+3\tau_3\frac{\partial\widehat\psi}{\partial \tau_3}\Bigr)\widehat\delta_3 -\tau_2\frac{\partial\widehat\psi}{\partial \tau_3}\widehat\delta_4+2\tau_3\frac{\partial\widehat\psi}{\partial\tau_3}\widehat\delta_5,\\
    \widehat\zeta_{1}=2\frac{\partial\widehat\psi}{\partial\tau_2}\widehat\delta_4+{\tau_2}\frac{\partial\widehat\psi}{\partial\tau_3}\widehat\delta_5,\quad
    \widehat\zeta_{2}=3\frac{\partial\widehat\psi}{\partial\tau_3}\widehat\delta_4+4\frac{\partial\widehat\psi}{\partial\tau_2}\widehat\delta_5.
\end{multline}

\begin{proposition}\label{bon ok ça marche}
Assume that $\widehat\sigma_0$, $\widehat\sigma_1$, $\widehat k_0$ and $\widehat k_1$ are such that inequalities \eqref{condition 1}, \eqref{condition 2} and \eqref{condition 3} are satisfied. Then there exist functions $\widehat\delta_3$, $\widehat\delta_4$ and $\widehat\delta_5$ that define locally Lipschitz functions of $\xi$ and such that the Clausius-Planck inequality \eqref{CP ordre 2+} is satisfied.
\end{proposition}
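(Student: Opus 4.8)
The plan is to bootstrap from MacMillan's model, which we already control, by isolating the single new ingredient $\widehat k_2$. Since a nonzero $\widehat\sigma_2$ makes the left-hand side of \eqref{CP ordre 2+} cubic in $d$, it cannot stay nonnegative for all $d$, so I take $\widehat\sigma_2=0$ (in accordance with Proposition \ref{plus fort que McM...+}). With this choice the left-hand side of \eqref{CP ordre 2+} is exactly the left-hand side of \eqref{CP McM} minus $\rho\,\frac{\partial\widehat a_m}{\partial\xi}(\theta,\xi):\widehat k_2(\theta,\xi)[d^2]$. By (the proof of) Proposition \ref{les conditions pour McM}, the assumed inequalities \eqref{condition 1}, \eqref{condition 2}, \eqref{condition 3} are equivalent to the nonnegativity of the left-hand side of \eqref{CP McM} for every $d\in V$, $\theta\in\R_+^*$, $\xi\in V$. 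Hence it suffices to produce locally Lipschitz $\widehat\delta_3,\widehat\delta_4,\widehat\delta_5$ such that
\[
    \frac{\partial\widehat a_m}{\partial\xi}(\theta,\xi):\widehat k_2(\theta,\xi)[d^2]\le 0\qquad\text{for all }d\in\Sym_3,\ \theta\in\R_+^*,\ \xi\in V .
\]

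The next step is a direct computation of this inner product. Using the representation \eqref{forme de da/dxi} of $\frac{\partial\widehat a_m}{\partial\xi}$, the form \eqref{k 2, indices 3 7+} of $\widehat k_2$, the commutation of $\frac{\partial\widehat a_m}{\partial\xi}(\theta,\xi)$ with $\xi$, the Cayley--Hamilton identity $\xi^3=\frac{\tau_2}{2}\xi+\frac{\tau_3}{3}I$ valid on $V$, the identity $\tr\xi^4=\frac12\tau_2^2$, and the trace identities $\xi:\dev(\xi^2)=\tau_3$ and $I:d^2=\tr(d^2)$, one finds, after collecting the coefficients of $\tr(d^2)$, $\tr(\xi d^2)$ and $\tr(\xi^2 d^2)$,
\[
    \frac{\partial\widehat a_m}{\partial\xi}:\widehat k_2[d^2]=\widehat\zeta_0\,\tr(d^2)+\widehat\zeta_1\,\tr(\xi d^2)+\widehat\zeta_2\,\tr(\xi^2 d^2),
\]
with $\widehat\zeta_0,\widehat\zeta_1,\widehat\zeta_2$ precisely the functions defined in \eqref{les zetas+} (this is what motivates that definition). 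Since $\tr(d^2)=\|d\|^2$, $\tr(\xi d^2)=\xi d:d$ and $\tr(\xi^2 d^2)=\|\xi d\|^2$, the right-hand side is $-P_\xi(d)$ in the notation of Lemma \ref{un poil moins trivial} with $(\alpha,\beta,\gamma)=(-\widehat\zeta_0,-\widehat\zeta_1,-\widehat\zeta_2)$. Therefore the displayed inequality holds whenever, at every $(\theta,\tau_2,\tau_3)\in\R_+^*\times\mathcal D$, either $\widehat\zeta_0\le0$, $\widehat\zeta_2\le0$ and $\widehat\zeta_1^2\le4\widehat\zeta_0\widehat\zeta_2$ (case i of the lemma), or $\bigl(\tau_2\widehat\zeta_1^2+\frac{\tau_2^2}{2}\widehat\zeta_2^2+2\tau_3\widehat\zeta_1\widehat\zeta_2\bigr)^{1/2}\le-\widehat\zeta_0$ (case ii). These are sufficient algebraic conditions on $\widehat\delta_3,\widehat\delta_4,\widehat\delta_5$, which is all that an existence statement requires.

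It remains to exhibit an admissible triple. The one delicate feature is that only $\widehat\zeta_0$ involves $\widehat\delta_3$, and with coefficient $\widehat\psi_2^*=2\tau_2\frac{\partial\widehat\psi}{\partial\tau_2}+3\tau_3\frac{\partial\widehat\psi}{\partial\tau_3}$, which vanishes at $\xi=0$ (where $\tau_2=\tau_3=0$) and possibly along a curve in $\mathcal D$; so one cannot simply solve $\widehat\psi_2^*\widehat\delta_3=(\,\cdots)$ and keep $\widehat\delta_3$ locally Lipschitz. The simplest admissible choice, already producing a nonzero $\widehat k_2$, is $\widehat\delta_4=\widehat\delta_5=0$ together with $\widehat\delta_3(\theta,\xi)=-\widehat\psi_2^*(\theta,\tr(\xi^2),\tr(\xi^3))$: then $\widehat\zeta_1=\widehat\zeta_2=0$, $\widehat\zeta_0=-(\widehat\psi_2^*)^2\le0$, the inequality is satisfied, and $\widehat\delta_3$ is a smooth, hence locally Lipschitz, function of $\xi$ since $\widehat\psi$ is smooth and $\xi\mapsto(\tr(\xi^2),\tr(\xi^3))$ is polynomial. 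To obtain genuinely richer behaviour I would keep $\widehat\delta_4,\widehat\delta_5$ as essentially arbitrary locally Lipschitz functions of $\xi$, but multiply them by a locally Lipschitz cutoff that vanishes on a neighbourhood of the (closed) zero set $Z$ of $\widehat\psi_2^*$, so that $\widehat\zeta_1$, $\widehat\zeta_2$ and the $(\widehat\delta_4,\widehat\delta_5)$-part of $\widehat\zeta_0$ all vanish near $Z$; away from $Z$ I would then choose $\widehat\delta_3=-g/\widehat\psi_2^*$ with $g$ locally Lipschitz, nonnegative, supported away from $Z$ and large enough to force case ii above, and check that the resulting $\widehat\delta_3$ extends to a locally Lipschitz function on $V$. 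I expect this last reconciliation --- enough negativity of $\widehat\zeta_0$ everywhere against the vanishing of $\widehat\psi_2^*$, while retaining local Lipschitz regularity of $\widehat\delta_3$ --- to be the only real obstacle; the tensor algebra of the second step is routine but must be handled carefully because of the commutation and Cayley--Hamilton manipulations.
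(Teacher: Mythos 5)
Your proposal is correct. The reduction (take $\widehat\sigma_2=0$, peel off the MacMillan part which is nonnegative by \eqref{condition 1}--\eqref{condition 3}, and reduce to $\frac{\partial\widehat a_m}{\partial\xi}:\widehat k_2[d^2]\le0$), the identification of this quantity as $\widehat\zeta_0\|d\|^2+\widehat\zeta_1\,\xi d:d+\widehat\zeta_2\|\xi d\|^2$ with the $\widehat\zeta_i$ of \eqref{les zetas+}, and the appeal to Lemma \ref{un poil moins trivial} are exactly the paper's steps. Where you diverge is in the choice of witness. Your minimal example $\widehat\delta_4=\widehat\delta_5=0$, $\widehat\delta_3=-\widehat\psi_2^*$ gives $\widehat\zeta_1=\widehat\zeta_2=0$ and $\widehat\zeta_0=-(\widehat\psi_2^*)^2\le0$, is smooth in $\xi$, and therefore already proves the existence statement as literally phrased; this is shorter than what the paper does. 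The paper instead produces a whole family with $\widehat\delta'_4,\widehat\delta'_5$ essentially arbitrary: it sets $\widehat\delta_4=(\widehat\psi_2^*)^2\widehat\delta'_4$, $\widehat\delta_5=(\widehat\psi_2^*)^2\widehat\delta'_5$, so that $\widehat\zeta_0=\widehat\psi_2^*(\widehat\delta_3-\widehat\zeta_0')$ and \eqref{zetas deuxième choix+} can be solved exactly for $\widehat\delta_3$ as in \eqref{delta trois final}; the local Lipschitz regularity of the resulting $\widehat\delta_3$ is obtained by recognizing the square root as the Frobenius norm $\|\widehat\zeta'_1\xi+\widehat\zeta'_2\xi^2\|$, which is Lipschitz in $\xi$ without any cutoff. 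This device handles the vanishing of $\widehat\psi_2^*$ (the difficulty you correctly identified) more cleanly than the cutoff-near-the-zero-set construction you sketch at the end, and it buys a much richer class of admissible $\widehat k_2$; but since that part of your argument is only needed for ``richer behaviour'' and not for the statement itself, your proof stands as written.
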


\begin{proof}
We know that $\widehat\sigma_2=0$. 
The Clausius-Planck inequality may be rewritten as 
$$
    \bigl(\widehat\sigma_0+ \widehat\sigma_1[d]\bigr) :d -\rho\frac{\partial\widehat a_m}{\partial\xi}:\bigl(\widehat k_0+\widehat k_1[d]\bigr)-\rho\frac{\partial\widehat a_m}{\partial\xi}:\widehat k_2[d^2]\ge0.
$$
Under the above hypotheses, we have previously shown that 
$$
    \bigl(\widehat\sigma_0+ \widehat\sigma_1[d]\bigr) :d -\rho\frac{\partial\widehat a_m}{\partial\xi}:\bigl(\widehat k_0+\widehat k_1[d]\bigr)\ge0.
$$
It is thus enough for our purposes to independently ensure that
\begin{equation}\label{bon ca suffit}
\frac{\partial\widehat a_m}{\partial\xi}:\widehat k_2[d^2]\le0.
\end{equation}
Let us compute the expression in the left-hand side of \eqref{bon ca suffit}.
Using the representation formula \eqref{forme de da/dxi} and the chosen form \eqref{k 2, indices 3 7+} for $\widehat k_2$, we get 
\begin{multline*}
    \frac{\partial\widehat a_m}{\partial\xi}(\xi):\bigl(\widehat k_2(\xi) [d^2]\bigr)= \Bigl(2\frac{\partial\widehat\psi}{\partial\tau_2}\tr(\xi^2)+3\frac{\partial\widehat\psi}{\partial\tau_3}\dev(\xi^2):\xi\Bigr)\widehat\delta_3\|d\|^2\\
    \hskip25pt+2\frac{\partial\widehat\psi}{\partial\tau_2}\widehat\delta_4\bigl(\xi:\dev(d^2)\bigr)+3\frac{\partial\widehat\psi}{\partial\tau_3}\widehat\delta_4 \bigl(\dev(\xi^2):\dev(d^2)\bigr)\\
    +2\frac{\partial\widehat\psi}{\partial\tau_2}\widehat\delta_5\bigl(\xi :\dev(\xi d^2+d^2\xi)\bigr)+3\frac{\partial\widehat\psi}{\partial\tau_3}\widehat\delta_5 \bigl(\dev(\xi^2):\dev(\xi d^2+d^2\xi)\bigr),
\end{multline*}
where, for the sake of brevity, we omitted all arguments $\tau_2, \tau_3$ in the scalar functions $\frac{\partial\widehat\psi}{\partial\tau_j}$ and $\widehat\delta_j$. Since as a rule, $\dev(A):\dev(B)=A:B-\frac13\tr(A)\tr(B)$, there holds
\begin{multline*}
    \dev(\xi^2):\dev(d^2)=\|\xi d\|^2-\frac 1 3\tr(\xi^2)\|d\|^2\text{ and }\\
    \dev(\xi^2):\dev(\xi d^2)=\frac 1 6\big(\tr(\xi^2){\,}\xi d:d +2\tr(\xi^3)\|d\|^2\bigr),
\end{multline*}
using for the latter the fact that $\xi^3=\frac12\tr(\xi^2)\xi+\frac13\tr(\xi^3)I$, which follows from the Cayley-Hamilton theorem and from the fact that $\tr(\xi)=0$. We thus obtain
\begin{align*}
    \frac{\partial\widehat a_m}{\partial\xi}(\xi):\bigl(\widehat k_2(\xi) [d^2]\bigr)&=\Bigl(2\frac{\partial\widehat\psi}{\partial\tau_2}\tr(\xi^2)+3\frac{\partial\widehat\psi}{\partial\tau_3}\tr(\xi^3)\Bigr)\widehat\delta_3\|d\|^2\\
    &\quad+2\frac{\partial\widehat\psi}{\partial\tau_2}\widehat\delta_4{\,}\xi d:d
    +3\frac{\partial\widehat\psi}{\partial\tau_3}\widehat\delta_4{\,}\bigl(\|\xi d\|^2-\frac 1 3\tr(\xi^2)\|d\|^2\bigr)\\
    &\qquad+4\frac{\partial\widehat\psi}{\partial\tau_2}\widehat\delta_5{\,}\|\xi d\|^2+\frac{\partial\widehat\psi}{\partial\tau_3}\widehat\delta_5{\,}\bigl(\tr(\xi^2){\,}\xi d:d+2\tr(\xi^3)\|d\|^2\bigr)\\
    &=\biggl(\Bigl(2\tau_2\frac{\partial\widehat\psi}{\partial\tau_2}+3\tau_3\frac{\partial\widehat\psi}{\partial\tau_3}\Bigr)\widehat\delta_3-\tau_2\frac{\partial\widehat\psi}{\partial\tau_3}\widehat\delta_4+2\tau_3\frac{\partial\widehat\psi}{\partial\tau_3}
    \widehat\delta_5\biggr)\|d\|^2\\
    &\quad+\Bigl(2\frac{\partial\widehat\psi}{\partial\tau_2}\widehat\delta_4+\tau_2\frac{\partial\widehat\psi}{\partial\tau_3}\widehat\delta_5\Bigr){\,}\xi d:d
    +\Bigl(3\frac{\partial\widehat\psi}{\partial\tau_3}\widehat\delta_4+4\frac{\partial\widehat\psi}{\partial\tau_2}\widehat\delta_5\Bigr)\|\xi d\|^2,
\end{align*}
so that
\begin{equation} \label{am kdeux+}
    \frac{\partial\widehat a_m}{\partial\xi}(\xi):\bigl(\widehat k_2(\xi)[d^2]\bigr)=
    \widehat\zeta_0{\,}\|d\|^2 +\widehat\zeta_1{\,}\xi d:d +\widehat\zeta_2{\,}\|\xi d\|^2.
\end{equation}

We are back in the familiar territory of Lemma \ref{un poil moins trivial}, and sufficient conditions for \eqref{bon ca suffit} to hold are  
\begin{equation} \label{zetas premier choix+}
  \widehat \zeta_0\le0,\;\widehat\zeta_2\le0,\text{ and }
    \widehat\zeta_1^2\le 4 \widehat\zeta_0\widehat\zeta_2,
\end{equation} 
or
\begin{equation} \label{zetas deuxième choix+}
    \Bigl(\tau_2\widehat\zeta_1^2+\frac{\tau_2^2}{2}\widehat\zeta_2^2+2\tau_3\widehat\zeta_1\widehat\zeta_2\Bigr)^{1/2}\le-\widehat\zeta_0.
\end{equation}

We can use either condition, but the second condition \eqref{zetas deuxième choix+} is slightly easier to manage.  We need to show that we can construct  functions  $\widehat\delta_3(\tau_2,\tau_3)$, $\widehat\delta_4(\tau_2,\tau_3)$ and $\widehat\delta_5(\tau_2,\tau_3)$ on  $\mathcal{D}$, cf. Lemma \ref{domaine}, that give rise to locally Lipschitz functions of $\xi$ and for which inequality \eqref{zetas deuxième choix+} holds.

Let $\widehat\psi_2^*= 2\tau_2\frac{\partial\widehat\psi}{\partial\tau_2} + 3 \tau_3 \frac{\partial\widehat\psi}{\partial\tau_3}$ as in \eqref{psi*deux}. We observe that 
$$
\widehat\zeta_0=\widehat\psi_2^*\widehat\delta_3+\bigl(-{\tau_2}\widehat\delta_4+2\tau_3\widehat\delta_5\bigr)\frac{\partial\widehat\psi}{\partial \tau_3},
$$
and that this is the only place where $\widehat\delta_3$ intervenes. So our strategy is to take $\widehat\delta_4=(\widehat\psi_2^*)^2\widehat\delta'_4$ and $\widehat\delta_5=(\widehat\psi_2^*)^2\widehat\delta'_5$ with $\widehat\delta'_4,\widehat\delta'_5$ smooth, so that their associated functions of $\xi$ are also smooth, and arbitrary. Thus $\widehat\zeta_0=\widehat\psi_2^*\bigl(\widehat\delta_3-\widehat\zeta_0'\bigr)$ with $\widehat\zeta_0'$ smooth. Inequality~\eqref{zetas deuxième choix+} is therefore implied by the equality
\begin{equation} \label{zetas deuxième choix finale}
    (\widehat\psi_2^*)^2\Bigl(\tau_2(\widehat\zeta'_1)^2+\frac{\tau_2^2}{2}(\widehat\zeta'_2)^2+2\tau_3\widehat\zeta'_1\widehat\zeta'_2\Bigr)^{1/2}=-\widehat\psi_2^*\bigl(\widehat\delta_3-\widehat\zeta_0'\bigr),
\end{equation}
where
$\widehat\zeta'_{1}=2\frac{\partial\widehat\psi}{\partial \tau_2}\widehat\delta'_4+\tau_2 \frac{\partial\widehat\psi}{\partial \tau_3}\widehat\delta'_5,\,
\widehat\zeta'_{2}=3\frac{\partial\widehat\psi}{\partial \tau_3}\widehat\delta'_4+4\frac{\partial\widehat\psi}{\partial\tau_2}\widehat\delta'_5$.
We thus see that we can take
\begin{equation} \label{delta trois final}
    \widehat\delta_3=\widehat\zeta_0'-\widehat\psi_2^*\Bigl(\tau_2(\widehat\zeta'_1)^2+\frac{\tau_2^2}{2}(\widehat\zeta'_2)^2+2\tau_3\widehat\zeta'_1\widehat\zeta'_2\Bigr)^{1/2},
\end{equation}
and satisfy \eqref{zetas deuxième choix finale}, thus consequently \eqref{zetas deuxième choix+}. Moreover, the mapping $\xi\mapsto\break \widehat\delta_3(\tr(\xi^2),\tr(\xi^3))$  is obviously locally Lipschitz on $V$, since it can easily be rewritten as 
$$
    \xi\mapsto\widehat\zeta'_0(\tr(\xi^2),\tr(\xi^3))-\widehat\psi^*_2(\tr(\xi^2),\tr(\xi^3))\|\widehat\zeta'_1(\tr(\xi^2),\tr(\xi^3))\xi+\widehat\zeta'_2(\tr(\xi^2),\tr(\xi^3))\xi^2\|,
$$
see Lemma \ref{un poil moins trivial}.
\end{proof}

\begin{remark}
In the above construction, $\widehat\delta_4$ and $\widehat\delta_5$ are actually smooth, and so is $\widehat\delta_3$ 
outside of the set $\tau_2(\widehat\zeta'_1)^2+\frac{\tau_2^2}{2}(\widehat\zeta'_2)^2+2\tau_3\widehat\zeta'_1\widehat\zeta'_2=0$.  
Paying a little more attention to the choice of $\widehat\delta'_4,\widehat\delta'_5$, we can make sure that $\widehat\zeta'_2$ does not vanish for $\xi\neq0$, which is possible if we assume for instance that $\nabla\widehat\psi\neq0$ on $\mathcal{D}\setminus\{(0,0)\}$. This implies that $\|\widehat\zeta'_1\xi+\widehat\zeta'_2\xi^2\|=0$ if and only if $\xi=0$. With this proviso, we see that $\widehat\delta_3$ is a smooth function on $\mathcal{D}\setminus\{(0,0)\}$. More generally, we can always manage to obtain a $\widehat k_2$ that is also smooth at $\xi=0$.
\end{remark}
\begin{remark}
 In the proof, we satisfied an inequality by satisfying the corresponding equality. It is clear that there are infinitely many more ways of achieving this, by adding to the proposed $\widehat\delta_3$ any quantity  that decreases the proposed $\widehat\zeta_0$.
\end{remark}
\begin{remark}
 In the case of the de Gennes energy, the above analysis applies since $\frac{\partial\widehat\psi}{\partial\tau_3}(\tau_2,\tau_3)=\nobreak-\frac{2b}{3}$, so that $\nabla\widehat\psi$ never vanishes.  Note that the zero locus of $\widehat\psi^*_2$ in $(\tau_2,\tau_3)$ space is the parabola $b\tau_3=\alpha(\theta-\theta^*)\tau_2+c\tau_2^2$, which may or may not intersect $\mathcal{D}\setminus\{(0,0)\}$, depending on the values of the temperature $\theta$ and of the material constants $\theta^*, \alpha, b$ and $c$.
 \end{remark}
\begin{example}
 As a kind of simplest possible nonzero example, let us take $\widehat\delta_4'=1$ and $\widehat\delta_5'=0$. In the de Gennes case, we have
$$
    \widehat\psi^*_2(\theta,\tau_2,\tau_3)=2\bigl(\alpha(\theta-\theta^*)\tau_2+c\tau_2^2-b\tau_3\bigr).
$$
The above choice thus leads to 
$$
    \widehat\delta_4=(\psi_2^*)^2=4\bigl(\alpha(\theta-\theta^*)\|\xi\|^2+c\|\xi\|^4-b\tr(\xi^3)\bigr)^2\text{ and }\widehat\delta_5=0,
$$  when expressed as  functions of $\xi$. Continuing the computations, we obtain
$$
    \widehat\zeta_0'=-\frac{2b}3\tau_2\widehat\psi_2^*,\;\widehat\zeta_1'=2(\alpha(\theta-\theta^*)+c\tau_2),\;\widehat\zeta_2'=-2b.
$$
 This yields the following formula for $\widehat\delta_3$, expressed again as a function of $\xi$,
$$
    \widehat\delta_3=2\bigl(\alpha(\theta-\theta^*)\|\xi\|^2+c\|\xi\|^4-b\tr(\xi^3)\bigr)\Bigl(-\frac{2b}3\|\xi\|^2-\bigl\|2\bigl(\alpha(\theta-\theta^*)+c\|\xi\|^2\bigr)\xi-2b\xi^2\bigr\|\Bigr).
$$
 Note that this function is $C^1$ at $\xi=0$.
 It should be kept in mind that the resulting flow rule $\widehat k_2$ is proposed on no other physical grounds than the fact that it happens to satisfy the second law of thermodynamics, which is pretty minimal.
\end{example}

\end{document}